\renewcommand{\pmod}[1]{\,(\textup{mod}\,#1)}
\theoremstyle{plain}
\newtheorem{theorem}{Theorem}[]
\newtheorem{lemma}[theorem]{Lemma}
\newtheorem{proposition}[theorem]{Proposition}
\newcommand{\C}{\mathbb{C}}
\newcommand{\R}{\mathbb{R}}
\newcommand{\Z}{\mathbb{Z}}
\newcommand{\SL}{\text{SL}_2(\mathbb{Z})}
\newcommand{\GN}{\Gamma}
\newcommand{\UH}{\mathcal{H}}
\newcommand{\D}{\Delta}
\def\half{{\textstyle\frac12}}
\def\sf#1#2{{\textstyle\frac{#1}{#2}}}
\renewcommand{\@biblabel}[1]{#1.}
\begin{document}

\title{\Large \bf Parameterizations of the Chazy equation}
\author{Sarbarish Chakravarty$^1$ and Mark J Ablowitz$^2$ \\[1ex]
\small$^1$\it
Department of Mathematics, University of Colorado, Colorado Springs, CO, 80933\\ 
\small$^2$\it 
Department of Applied Mathematics, University of Colorado, Boulder, CO, 80309\\}
\date{}

\maketitle

\begin{abstract}
\noindent The Chazy equation $y''' = 2yy'' - 3y'^2$ is derived from the automorphic
properties of Schwarz triangle functions $S(\alpha, \beta, \gamma; z)$. It is shown 
that solutions $y$ which are analytic in the fundamental domain of these triangle 
functions, only correspond to certain values of $\alpha, \beta, \gamma$.
The solutions are then systematically constructed. These analytic solutions provide 
all known and one new parametrization of the Eisenstein series $P, Q, R$
introduced by Ramanujan in his modular theories of signature 2, 3, 4 and 6.
\end{abstract}


\thispagestyle{empty}
\newpage

\begin{center}
{\bf 1. Introduction}
\end{center}
In a series of papers~\cite{Chazy1}--\cite{Chazy3} between 1909 -- 1911, J. Chazy considered a 
class of nonlinear differential equations of the form
\begin{equation}
y'''- 2yy''+3y'\,^2 = \frac{4}{36-k^2}(6y'-y^2)\,^2\,, \qquad 0 \leq k \neq 6\,,
\label{chazy12}
\end{equation}
during the course of his work on the extension of Painlev\'e's program
to equations of third order. The general solutions of these equations,
which are now referred to as Chazy class XII, can be parametrized by 
solutions of appropriate hypergeometric equations for $k>0$, and the 
Airy equation when $k=0$. Furthermore, for $6 < k \in \Z$ the solutions
evolve from a generic initial data to form a natural boundary which is closed
curve in the complex plane
beyond which the solutions can not be analytically continued. For the
special cases of $k=2,3,4$ and 5, the associated hypergeometric solutions
are algebraic functions classified by Schwarz~\cite{Schwarz1}, and leads
a 3-parameter family of rational solutions for \eqref{chazy12}. 
The focus of this note is on the equation
\begin{equation}\label{chazy}
y'''- 2yy''+3y'\,^2 = 0\,,
\end{equation}
which corresponds to the limiting case $k \to \infty$ of \eqref{chazy12},
and will be referred to as the Chazy equation throughout this article.
Chazy~\cite{Chazy2} showed that the general solution of \eqref{chazy} also possesses a 
movable natural boundary by relating the solution $y(z)$ to that of the
hypergeometric equation
$s(s-1)\chi''+ (\sf{7s}{6} -\half)\chi'-\sf{1}{144}\chi=0$.

In recent years, it has been shown that the Chazy equation arises in several areas 
of mathematical physics including magnetic monopoles \cite{Atiyah-H}, 
self-dual Yang-Mills and Einstein equations \cite{CAC90, Hitchin}, and topological 
field theory \cite{Dubrovin}. In addition, \eqref{chazy} has been
derived as special reductions of hydrodynamic type equations \cite{Ferapontov}
as well as stationary, incompressible Prandtl boundary layer 
equations~\cite{Rosenhead}. These results have renewed interest in the study
of the Chazy equation. For example, the SL$_2(\C)$ symmetry of \eqref{chazy}
was exploited to systematically derive its general solution in~\cite{CO1996};
moreover, in Ref.~\cite{Ki1998} the group invariance was applied to elucidate the 
role of its pole singularities, which
under suitable perturbations coalesce to the natural boundary.
Yet another interest in Chazy equation stems from its well-known connection 
with the automorphic forms associated with the modular group 
$\SL$ and its subgroups. The notation $\GN(1) = \SL$ for the full modular group 
(cf. \cite{Rankin1}) will be used throughout this article.  
It is quite significant that the emergence of the Chazy equation \eqref{chazy}
in the theory of modular forms and elliptic functions can be traced back to the work 
of Ramanujan and subsequently in the work of Rankin and others. Here we briefly recall 
some of these interesting results which were anteceded by Chazy's work, but were 
apparently not noticed by those researchers.

In 1916, Ramanujan \cite{Ramanujan-arith}, \cite[pp 136--162]{Ramanujan-collect},
introduced the functions $P(q)$, $Q(q)$ and $R(q)$ defined for $|q|<1$ by
\begin{equation}  
P(q):=1-24\sum_{n=1}^{\infty} \frac{nq^n}{1-q^n}\,, \quad
Q(q):=1+240\sum_{n=1}^{\infty}\frac{n^3q^n}{1-q^n} \,, \quad
R(q):=1-504\sum_{n=1}^{\infty} \frac{n^5q^n}{1-q^n}\,,
\label{PQR}
\end{equation}
and proved by using trigonometric series identities that these functions in
\eqref{PQR} satisfy the differential relations
\begin{equation}
\delta P =\frac{P^2-Q}{12}\,, \qquad
\delta Q=\frac{PQ-R}{3}\,, \qquad
\delta R =\frac{PR-Q^2}{2}\,, \qquad \delta := q \frac{d}{dq}\,.
\label{dPQR}
\end{equation}
By expressing the system \eqref{dPQR} as a single differential equation
for $P(q)$, then defining 
\begin{equation}
y(z):= \pi i P(q)\,, \quad \quad \, q := e^{2 \pi iz}\,, \quad \text{Im}(z)>0 \,,
\label{yP}
\end{equation}
one easily recovers the Chazy equation \eqref{chazy}~\cite{Ablowitz-H3}.
(Note that in this case the natural boundary corresponds to the unit circle $|q|=1$
or the real axis $\text{Im}(z)=0$). Ramanujan's $P(q), Q(q), R(q)$ correspond to the 
(first three) Eisenstein series associated with the modular group $\GN(1)$, but this
modern terminology is not needed in the present context.
Ramanujan also considered the modular discriminant function
\begin{equation*}
\D(q):=\frac{Q^3(q)-R^2(q)}{1728}= \, q\prod_{n=1}^\infty(1-q^n)^{24} 
:= \, \sum_{n=1}^\infty\tau(n)q^n\,, \qquad \tau(n) \in \mathbb{Z}\,,
\end{equation*} 
he proved as well as conjectured many properties associated with the integer coefficients 
$\tau(n)$ above, which are referred to as Ramanujan's tau-functions. It follows from the
last two equations in \eqref{dPQR} that $P(q)$ is simply the logarithmic derivative
of $\D(q)$, that is
\begin{equation}
P(q) = \frac{\D'(q)}{\D(q)}\,, \qquad \mbox{or} \qquad y(z) = \frac{1}{2}\frac{\D'(z)}{\D(z)}\,.
\label{D-chazy}
\end{equation}
Equation \eqref{D-chazy} also follows by taking the logarithmic derivative of
the infinite product formula for $\D(q)$ above, and comparing it with the
$q$-expansion for $P(q)$ in \eqref{PQR}.
Rankin in \cite{Rankin}, showed using properties of modular forms that $\Delta(z)$ 
satisfies the equation 
$$ 2\D''''\D^3-10\D'''\D'\D^2-3\D''\,^2\D^2+ 24\D''\D'\,^2\D-13\D'\,^4=0\,, $$
which is homogeneous of degree $4$ (in both $\Delta$ and its derivatives).
Rankin's $\D$-equation follows from the Chazy equation \eqref{chazy} with 
$y(z)$ as in \eqref{D-chazy}.
In Ref.~\cite{Rankin}, Rankin also derives (among others) two more equations which
are equivalent to
\begin{equation*}
4Q\delta^2 Q - 5(\delta Q)^2 = 960 \D\,, \qquad
6R\delta^2 R - 7(\delta R)^2 = -3024Q \D\,,
\end{equation*}
which can be also deduced from the Ramanujan equations \eqref{dPQR}.  
Moreover, re-expressing the first equation above in terms of $P(q)$ and
its derivatives by employing \eqref{dPQR} and the definition of $\D(q)$,
yields \eqref{chazy} once more; while the second equation turns into the differential
consequence of the Chazy equation. The first equation for $Q$ above, was also obtained by 
B. van der Pol, who used it to derive the arithmetical identity~\cite{Vanderpole}
$$ \tau(n) = n^2\sigma_3(n) + 60 \sum_{m=1}^{n-1}(2n-3m)(n-3m)\sigma_3(m)\sigma_3(n-m)\,,
\quad \sigma_k(n):=\sum_{d|n}d^k \,, \,\, n \in \mathbb{N}\,,$$
relating Ramanujan's tau-functions and the sum-of-divisor function 
$\sigma_3(n)$. This and such other arithmetical identities follow from 
equating the $q$-expansions of both sides
of a differential (or polynomial) relation involving the modular functions.
We note here that using \eqref{yP} in the Chazy equation leads to the
identity
$$ n^2(n-1)\sigma_1(n) + \sum_{m=1}^{n-1}12m(5m-3n)\sigma_1(m)\sigma_1(n-m) = 0 \,,$$
that follows from the $q$-expansion of $P(q)$ in \eqref{PQR}.

Ramanujan extensively studied the properties of his modular functions $P,Q,R$ and 
established numerous identities involving them~\cite{Ramanujan-collect,Ramanujan-notebook}. 
Quite remarkably, he like Chazy,
also employed the theory of hypergeometric functions to establish an implicit
parametrization of the functions $P,Q,R$, which plays a crucial role in the proofs of
Ramanujan's modular identities. In his second notebook~\cite{Ramanujan-notebook},
Ramanujan considered the hypergeometric function 
$_2F_1(\half, \half; 1; x)$ related to the complete elliptic integral of 
the first kind by (see e.g.,~\cite{Bateman,Nehari})
$$
K(x) := \int_0^{\pi/2}\,\frac{dt}{\sqrt{1-x\sin^2t}} = 
\frac{\pi}{2} \,_2F_1(\half,\half;1;x) \,,
$$ 
and gave the following implicit parametrization of the functions $P,Q,R$
\begin{equation}
P(q) = (1-5x)\chi^2 + 12x(1-x)\chi\chi'\,, \quad
Q(q) = (1+14x+x^2)\chi^4\,, \quad R(q) = (1+x)(1-34x+x^2)\chi^6\,,
\label{Ppara}
\end{equation}
where $\chi(x):=\,_2F_1(\half, \half; 1; x)$ and the nome $q$ is defined by
$$ q = e^{-u}\,, 
\qquad u:= \pi\,\frac{_2F_1(\half,\half;1;1-x)}{_2F_1(\half,\half;1;x)}\,. $$ 
Equation \eqref{Ppara} originates from Jacobi's work on elliptic
functions~\cite{Jacobi}. The derivation of \eqref{Ppara} involves expressing 
$P,Q,R$ as logarithmic derivatives of the quotients of theta functions and 
application of the remarkable Jacobi-Ramanujan inversion formula
$$ \vartheta_3(0|q) := 1+2\sum_{n=1}^{\infty}q^{n^2} = \sqrt{\chi(x)}\,.$$
It is worth pointing out here that in Refs.~\cite{Chazy1,Chazy3}, 
Chazy had also noted the equivalence between \eqref{chazy} and a system of 
three first order equations introduced by G. Darboux~\cite{Darboux} in 1878
(see Section 2). This first order system is different from the Ramanujan
system \eqref{dPQR} even though its solutions were given by G. Halphen~\cite{Halphen} 
in terms of logarithmic derivatives of the 
null theta functions $\vartheta_2, \vartheta_3, \vartheta_4$. Each of the null theta
functions satisfies Jacobi's nonlinear equation  
$$ (\theta^2\theta''' - 15\theta\theta'\theta''+30\theta'^3)^2 =
32(\theta\theta''-3\theta'^2)^3 +\pi^2\theta^{10}(\theta\theta''-3\theta'^2)^2 = 0 \,,$$
for $\theta(z) := \vartheta_i(0|q)\,, \, i=2...4$ and where $q = e^{\pi i z}$~\cite{Jacobi1}.
Ramanujan relied heavily on the parametrization \eqref{Ppara} and the inversion formula
to develop his theory of modular equations involving functional relations
among complete elliptic integrals at different arguments. Interestingly, he also proposed 
alternative parametrizations for $P(q), Q(q), R(q)$ in terms of other hypergeometric 
functions (see Section 4), and where the appropriate nomes $q$ can be expressed by
\begin{equation}
 q_r = e^{-u_r}\,, \qquad u_r:= \frac{\pi}{\sin(\frac{\pi}{r})} \,
\frac{_2F_1(\frac1r,\frac{r-1}{r};1;1-x)}{_2F_1(\frac1r,\frac{r-1}{r};1;x)}\,,
\qquad r=2,3,4,6 \,. 
\label{signature}
\end{equation}
The index $r$ is referred to as the {\it signature} of Ramanujan's theories. The
case $r=2$ corresponds to Ramanujan's original theory of modular equations while signatures 
3,4 and 6 correspond to Ramanujan's {\it alternative} theories. He stated the results
in his alternative theories without proof~\cite{Ramanujan-notebook}. Proofs
were constructed much later~\cite{Bor91,BBG95} and is now
an important topic of research~\cite{Berndt1}. Indeed a unified framework
of Ramanujan's modular equations in different signatures based on the more contemporary
theory of modular forms, elliptic surfaces and Gauss-Manin connections was proffered
only recently in Ref~\cite{Maier1}.

In this paper, we derive the parametrization for the functions $P,Q,R$ in Ramanujan's 
original and alternative theories, as well as some new ones (see cases (1), (2) in
Table 1) via the parametrization of the Chazy solution 
$y(z)$ in terms of Schwarz triangle and hypergeometric functions. Our approach is 
not number theoretic, but rather based on the theory of Fuchsian differential equations 
and the action of its projective monodromy group on certain differential polynomials.
Section 2 provides some necessary background on Fuchsian equations and their role
in the conformal mappings of the complex plane to triangular domains T bounded by 
circular arcs. The triangle functions introduced by Schwarz
form the natural coordinates on T,
and which remain invariant under the group of automorphisms induced by the
projective monodromy groups of the Fuchsian equations. The relationship between the
triangle functions and the Chazy equation is established in Section 3. In section 4, we
give the explicit parametrization of the Chazy solution $y(z)$ in terms of the
appropriate hypergeometric functions, thereby connecting our results to those of
Ramanujan's parametrization in his theories of modular equations. Along these lines,
we also note the work in Ref.~\cite{Huber} which utilizes arguments motivated by
the Lie symmetry analysis of \eqref{dPQR} and related differential equations
to obtain various parametrizations for the functions $P, Q, R$.
The approach in this note is different from that in Ref.~\cite{Huber}.

\begin{center}
{\bf 2. Conformal mapping and triangle functions}
\end{center}
The mapping properties of Fuchsian differential equations play
a significant role in the theory of automorphic functions. In particular,
H. A. Schwarz in 1873 carried out an exhaustive study of the conformal
maps induced by ratio of solutions of hypergeometric equations and the 
so called triangle functions~\cite{Schwarz1}. This beautiful theory
has since been developed significantly and is treated in numerous 
monographs. The brief overview presented in this section closely
follow the texts~\cite{Ford} and ~\cite{Nehari}.

A second order, Fuchsian differential equation with three regular singular 
points in the complex plane can be cast into the form
\begin{equation}
u''+\frac{V(s)}{4} \, u=0 \,, \qquad \qquad 
V(s)=\frac{1-\alpha^2}{s^2}+\frac{1-\beta^2}{(s-1)^2}+
\frac{\alpha^2+\beta^2-\gamma^2-1}{s(s-1)} \,,
\label{ueqn}
\end{equation}
where $\alpha, \beta, \gamma$ are the exponent differences (for any
pair of linearly independent solutions) prescribed at the
singular points $0,\,1$ and $\infty$, respectively. (Note that $f'$ 
indicates derivation with respect to the argument
of the function $f$ throughout this article). The ratio $z(s)$ of 
any two linearly independent solutions $u_1, u_2$ of \eqref{ueqn}
is a multi-valued function branched at the regular singular points,
and satisfies the Schwarzian differential equation
\begin{equation}
\{z, s\} = \frac{V(s)}{2}\,, \qquad \qquad  
\{z, s\} := \frac{z'''}{z'}-\frac{3}{2}\left(\frac{z''}{z'}\right)^2 \,.
\label{schwarz} 
\end{equation} 
The monodromy group $G \subset \text{GL}_2(\C)$ resulting from the analytic 
extensions of the pair $(u_1, u_2)$ along all possible closed loops 
through an ordinary point $s_0$, is determined (modulo conjugation) by 
the exponent differences. $G$ acts projectively 
on the ratio $z(s)$ via fractional linear transformations
\begin{equation*}
z \to \gamma(z) := \frac{az+b}{cz+d} \,, \qquad 
\gamma = \begin{pmatrix} a & b \\ c & d\end{pmatrix} \in G\,.
\end{equation*}
Since both $\gamma$ and $\lambda \gamma$ yield the same fractional linear 
transformation for any complex $\lambda \neq 0$, the projectivized monodromy group
is the quotient group $\GN \cong G/\lambda I \subseteq \text{PSL}_2(\C)$ where 
$I$ is the $2 \times 2$ identity matrix. 
Note that both $z$ and $\gamma(z)$ satisfy the same equation 
\eqref{schwarz} due to the invariance property of the Schwarzian derivatives:\, 
$\{z,s\} = \{\gamma(z), s\}$. A special class of solutions of \eqref{schwarz}
was extensively investigated by Schwarz who considered the parameters in $V(s)$
to be real and $0 < \alpha, \beta, \gamma < 1$.
If $V(s)$ is to be further restricted such that $\alpha+\beta+\gamma<1$, then
a branch of $z(s)$ maps the upper-half $s$-plane 
(Im $s \geq 0$) onto a hyperbolic triangle T in the extended complex plane,
bounded by three circular arcs which enclose interior angles
$\alpha\pi$, $\beta\pi$, and $\gamma\pi$ at the vertices $z(0)$, $z(1)$, 
and $z(\infty)$. By the Schwarz reflection principle, the analytic extension 
of this branch to the lower-half plane across a line segment between
any two branch points, maps the lower-half plane to an adjacent triangle T$'$ 
that is the image of T under reflection across the circular arc which forms
their common boundary. Continuing this process, the complete set of branches of 
$z(s)$ maps the $s$-plane onto a Riemann surface spread over the $z$-plane
consisting of an infinite number of circular triangles obtained by
inversions across the boundaries of T and its images.
The necessary and sufficient condition that this Riemann surface is a plane
region D (being the uniform covering of non-overlapping triangles)
is that the exponent differences $\alpha, \beta, \gamma$ be either zero
or reciprocals of positive integers. In this case, the inverse $s(z)$ is
a single-valued, meromorphic, automorphic function whose automorphism 
group is the projective monodromy group $\GN$ defined above. That is,
$s(\gamma(z)) = s(z) \,,\gamma \in \GN$.
In this setting of conformal mapping, $\GN$ is a discrete 
subgroup of $PSL(2,\R)$, and turns out to be
the group of fractional linear 
transformations generated by an even number of
reflections across the boundaries of the circular triangles. More precisely,
let $r_{\alpha}, r_{\beta},r_{\gamma}$ be the reflections across the sides
opposite to the vertices $z(0)$, $z(1)$, and $z(\infty)$ of T. Then the
automorphism group is generated by the elements $R_{\alpha} = r_{\beta}r_{\gamma}$,
$R_{\beta} = r_{\alpha}r_{\gamma}$ and $R_{\gamma} = r_{\alpha}r_{\beta}$
which are rotations about each vertex by $2\pi\alpha, 2\pi\beta$ and
$2\pi\gamma$, and satisfy 
$$ R_{\alpha}^{1/\alpha} = R_{\beta}^{1/\beta} = R_{\gamma}^{1/\gamma} =
R_{\alpha}R_{\beta}R_{\gamma} = 1 \,.$$
A vertex with a nonzero (interior) angle $\pi/m, \, m \in \mathbb{Z}^{+}$
is called an elliptic fixed point of order $m$, whereas a vertex with zero angle is
called a parabolic fixed point of the group.

In its domain of existence D, the only possible singularities of $s(z)$ and its 
derivatives are located only at the vertices where $s(z)$ takes the value of 0, 1, 
or $\infty$. Due to the automorphic property, it is sufficient to consider the 
function $s(z)$ on the triangle T. Let $z=z_0$ be a vertex of the triangle T 
such that $s(z_0) = s_0 \in \{0, 1, \infty\}$ 
is a regular singular point of the Fuchsian equation \eqref{ueqn} with exponent difference 
$\mu \in \{\alpha, \beta, \gamma\}$. The behavior of the $s(z)$ near a vertex $z_0$
depends on whether $z_0$ is an elliptic or a parabolic fixed point of $\GN$.

(i)\, If $\mu = 1/m\,, m \in \mathbb{Z}^+$, then $z_0$ is an elliptic 
fixed point of order $m$. In this case, a pair of fundamental solutions of 
\eqref{ueqn} in the neighborhood of $s=s_0 \neq \infty$ are of the form
$$ u_1(s) = (s-s_0)^{(1+\mu)/2}\psi_1(s)\,, \qquad \qquad 
u_2(s) = (s-s_0)^{(1-\mu)/2}\psi_2(s) \,,$$
where $\psi_i(s),\, i=1,2$ admit convergent power series in the 
neighborhood of $s=s_0$, and $\psi_i(s_0) \neq 0$. By taking appropriate linear
combinations of the solutions $u_1$ and $u_2$, $z(s)$ is defined via 
$$z- z_0 = \frac{u_1}{u_2} = (s-s_0)^{\mu} \psi(s) \,,$$ 
where $\psi(s)$ is analytic near $s=s_0$ and $\psi(s) \neq 0$. The inverse 
function is single-valued, and given by
\begin{subequations}
\begin{equation}
 s(z) = s_0 + (z-z_0)^m\phi_1(z)\,, \quad \qquad \phi_1(z_0) \neq 0 \,,
\label{szero}
\end{equation}
where $\phi(z)$ is analytic near $z=z_0$. Thus, $s-s_0$ has 
a zero of order $m$ at $z=z_0$.
If $s_0 = \infty$, then by making the transformation $s'=1/s$ in \eqref{ueqn} one finds
in a similar manner as above that $s(z)$ has a pole of 
order $m$ at $z=z_0$. That is,
\begin{equation}
 s(z) = (z-z_0)^{-m}\phi_2(z)\,, \quad \qquad \phi_2(z_0) \neq 0 \,,
\label{spole}
\end{equation}
and $\phi_2(z)$ is analytic in the neighborhood of $z=z_0$. 
\end{subequations}
(ii)\, When $z_0$ is a parabolic vertex, $\mu = 0$. In this case, a pair of linearly
independent solutions of \eqref{ueqn} in a neighborhood of $s=s_0 \neq \infty$ 
is given by
$$ u_1(s) = (s-s_0)^{1/2}\psi_3(s)\,, \qquad \qquad
u_2(s) = [k \log(s-s_0) + \psi_4(s)]u_1(s) \,, $$
where $\psi_3(s_0) \neq 0, \, \psi_4(s_0) = 0$, $k$ is a constant, and 
both $\psi_3(s), \psi_4(s)$
admit convergent power series in the neighborhood of $s=s_0$. In terms of
$u_1, u_2$, the function $z(s)$ can be defined as
$$ 2\pi iz = \frac{u_2}{u_1} \qquad \mbox{or} \qquad 
\frac{2\pi i}{z-z_0} =  \frac{u_2}{u_1} \,,$$
depending on whether $z_0 = \infty$, or a finite vertex in the extended $z$-plane.    
From above, the inverse function $s(z)$ can be expressed as a power series
\begin{subequations}
\begin{equation} 
s(z) = s_0 + \sum_{n=1}^{\infty} c_nq^n\,, \qquad q := e^{2\pi iz/k}\, 
\left(\mbox{or} \,\, q := e^{\frac{2\pi i}{k(z-z_0)}}\right)\,, \qquad
c_n \in \C \,,
\label{szeroq}
\end{equation}
in the uniformizing variable $q$. Thus, $s(z)$ is a 
single-valued function of $z$,
holomorphic at $q=0$. If $s_0 = \infty$, then a similar analysis as
above can be carried out by making the transformation $s' = 1/s$,
and by introducing the same local uniformizer $q$ as in \eqref{szeroq}.
In this case, $s(z)$ has a pole at $q=0$, and the $q$-expansion
\begin{equation}
 s(z) = \frac{d}{q} + \sum_{n=0}^{\infty} d_nq^n\,, \qquad
d, \, d_n \in \C \,.
\label{spoleq}
\end{equation}
\end{subequations}

A pair of adjacent triangles T and T$'$ 
form the fundamental region X of the automorphism group $\GN$ whose action
on X tessellates the region D. The inverse function $s(z)$, which
maps X to the entire 
extended $s$-plane, generates the function field (over $\C$) of X. The
boundary of D in the $z$-plane, is a $\GN$-invariant circle which 
is orthogonal to (all three sides of) the triangle T and all its reflected images. 
This orthogonal circle is the set of limit points for the automorphic group $\GN$,
it is a dense set of essential singularities forming a {\it natural boundary} for the
function $s(z)$. In its domain of existence D, the only possible singularities
of $s(z)$ are poles which correspond to the vertices where $s(z) = \infty$.
Thus, $\GN$ is a Fuchsian group of the first kind (see e.g.,~\cite{Ford}, Sec. 30
for a definition) and $s(z)$
is a simple automorphic function of $\GN$. Fuchsian groups associated with 
differential equations \eqref{ueqn} with three regular singular points 
are referred to as triangle groups and $s(z) := S(\alpha, \beta, \gamma; z)$ 
are called Schwarz triangle functions.  It follows from \eqref{schwarz} that
the inverse function $s(z)$ satisfies the following third order nonlinear equation
\begin{equation}
\{s,z\} + \frac{s'^2}{2} V(s) = 0\,.
\label{ischwarz}
\end{equation}
Conversely, when the parameters $\alpha, \beta, \gamma$ in $V(s)$ are either zero
or reciprocals of positive integers, a three-parameter family of
solution of \eqref{ischwarz} is obtained as the inverse of the ratio
\begin{equation}
z(s) = \frac{A u_1(s)+B u_2(s)}{Cu_1(s) +Du_2(s)}\,, \qquad
A,B,C,D \in \C\,, \quad AD-BC =1\,,
\label{zdef}
\end{equation}
where $u_1$ and $u_2$ are linearly independent solutions of \eqref{ueqn}.
The solution is single-valued and meromorphic inside a disk in the extended
$z$-plane, and can not be continued analytically across the boundary of
the disk. This boundary is movable as its center and radius are
completely determined by the initial conditions which depend
on the complex parameters $A,B,C,D$.

A number of nonlinear differential equations whose solutions
possess movable natural boundaries, can be solved by first transforming them
into a Schwarzian equation \eqref{ischwarz} and then following the 
{\it linearization} scheme described above. We briefly recount some examples
of such nonlinear differential equations and their relations to Schwarz triangle functions. 
In 1881, Halphen considered a slightly different 
version~\cite[pp 1405, Eq (5)]{Halphen1} of the following nonlinear 
differential system 
\begin{align} 
w_1'&=-w_2w_3+w_1(w_2+w_3)+\tau^2\,,  \nonumber\\ 
w_2'&=-w_3w_1+w_2(w_3+w_1)+\tau^2\,,  \label{gdh}\\ 
w_3'&=-w_1w_2+w_3(w_1+w_2)+\tau^2\,,  \nonumber  
\end{align}
$$ \tau^2 = \alpha^2(w_1-w_2)(w_2-w_3)+\beta^2(w_2-w_1)(w_1-w_3)+
\gamma^2(w_3-w_1)(w_2-w_3) \,,$$
for functions $w_i(z) \neq w_j(z),\, i\neq j, \, i,j \in \{1,2,3\}$, and constants 
$\alpha,\, \beta,\,\gamma$. Halphen presented the solutions of
\eqref{gdh} in terms of hypergeometric functions. 
More recently, the authors found that \eqref{gdh} arises
as a symmetry reduction of self-dual Yang-Mills equations, and called it 
the generalized Darboux-Halphen (gDH) system~\cite{Ablowitz-H1,Ablowitz-H2}. 
If in fact, $w_1(z),\, w_2(z), \, w_3(z)$ are parametrized in terms of
a single function $s(z)$ (and its derivatives) as
\begin{equation}
w_1 = \frac{1}{2}\Big[\log\Big(\frac{s'}{s} \Big)\Big]'\,, \quad
w_2 = \frac{1}{2}\Big[\log\Big(\frac{s'}{s-1} \Big)\Big]'\,, \quad
w_3 = \frac{1}{2}\Big[\log\Big(\frac{s'}{s(s-1)} \Big)\Big]'\,,
\label{wdef}
\end{equation}
then $s(z)$ is a solution of \eqref{ischwarz}, where the constants
$\alpha,\, \beta,\,\gamma$ in $V(s)$ of \eqref{ischwarz} are the
same as those appearing in $\tau^2$ of \eqref{gdh}.
The special case $\alpha = \beta = \gamma = 0$ of the gDH system corresponds
to the ``classical'' Darboux-Halphen (DH) system which is equation \eqref{gdh} 
with $\tau^2=0$.
This equation originally appeared in Darboux's work of triply orthogonal 
surfaces on $\mathbb{R}^3$ in 1878 \cite{Darboux}, and its solution was 
subsequently given by Halphen \cite{Halphen} in 1881. The variables $w_i$ 
associated with the DH system can be parametrized as in \eqref{wdef} by the 
triangle function $s(z) = S(0,0,0;z)$; the latter is related to 
the elliptic modular function
$\lambda(z) := \vartheta_2^4(0|z)/\vartheta_3^4(0|z)$, expressed
in terms of null theta functions. Chazy showed that the function
$y(z):=2(w_1+w_2+w_3)$ satisfies \eqref{chazy}
introduced in Section 1. Chazy~\cite{Chazy3} also noted that besides the elliptic 
modular function $\lambda(z)$, the 
solution to \eqref{chazy} can also be given in terms of the triangle function 
$S(\half,\frac13,0;z)$ which is the same as the modular $J$-function for the group $\SL$.
In more recent work Bureau~\cite{Bureau} in 1987, investigated a class of third order nonlinear
equations, and expressed their general solutions in terms of the Schwarz 
triangle functions $s := S(\alpha,\beta,\gamma;z)$.
In particular, Bureau's class includes the Chazy equation \eqref{chazy}.

It is natural to inquire whether the solution of the Chazy equation \eqref{chazy} admits
parametrization in terms of {\it other} triangle functions besides $S(0,0,0;z)$ and
$S(\half,\frac13,0;z)$. One motivation of the present work is to address this
question and to investigate the possible linearizations
of the Chazy equation \eqref{chazy} via solutions of the Fuchsian equation \eqref{ueqn}
with parameters $\{\alpha, \beta, \gamma\}$ other than $\{\half,\frac13,0\}$
or $\{0, 0, 0\}$.
In the following we outline a method to systematically derive the Chazy equation from
the solution of the Schwarz equation \eqref{ischwarz} for appropriate values of the
parameters $\alpha, \beta, \gamma$. Our construction utilizes the transformation properties 
of $s(z)$ and its derivatives under the automorphism group $\GN$ and selects 
those triangle functions which provide a natural parametrization of the Chazy solution
$y(z)$ that is holomorphic in their domain of existence D.

\begin{center}
{\bf Triangle functions and the Chazy equation}
\end{center}
Let $\GN \subset \text{PSL}_2(\mathbb{R})$ be a Fuchsian triangle group 
with $\alpha, \beta, \gamma$ either zero or a reciprocal of positive 
integers, as in the previous section, and let $s(z)$ be a simple 
automorphic function of $\GN$
defined on a domain D of the complex plane. A meromorphic function $f$ on
D is called a {\it automorphic form} of {\it weight k} for
$\Gamma$ if
\begin{equation*}
f(\gamma(z))=(cz+d)^kf(z)\,, \qquad
\gamma = \begin{pmatrix} a & b \\ c & d\end{pmatrix} \in \Gamma\,,
\qquad \gamma(z) = \frac{az+b}{cz+d}
\end{equation*}
for all $z\in$ D. If $k=0,$ then $f$ is called a 
{\it automorphic function} on $\GN$ and is a rational function of $s(z)$.
Consider a gDH system ~\eqref{gdh} for $\GN$ where the gDH variables
$w_i,\, i=1, 2, 3$ are parametrized as in \eqref{wdef}.
From the automorphic property: $s(\gamma(z)) = s(z)$, it
follows that $s'(\gamma(z)) = (cz+d)^2s'(z)$, and that
\begin{equation*}
w_i(\gamma(z)) = (cz+d)^2w_i(z)+ c(cz+d)\,, \qquad \gamma \in \GN \,.
\end{equation*}
That is, $s'(z)$ is a weight 2 automorphic form, whereas the $w_i$ are called
{\it quasi}-automorphic forms of weight 2. Define in terms of the gDH variables,
the following function
\begin{equation}
y(z) = a_1w_1 + a_2w_2 + a_3w_3\,,
\label{ydef}
\end{equation}
on D, where the coefficients $a_i$ are constants. The objective of this section
is to determine an autonomous differential equation which is a polynomial
in $y$ and its derivatives.

It follows from the transformation
property of the $w_i$ above, that $y(z)$ transforms under the action of 
$\GN$ as
\begin{equation}
y(\gamma(z)) = (cz+d)^2y(z)+ pc(cz+d)\,, \qquad \gamma \in \GN \,,
\label{ytransf}
\end{equation}
where $p = a_1+a_2+a_3$ is called the coefficient of {\it affinity} 
of the quasi-automorphic form $y(z)$. 
Furthermore, a sequence of automorphic forms
can be constructed on the ring of differential polynomials of $y(z)$
as follows:
\begin{lemma} \label{fk}
Let $y(z)$ be a quasi-automorphic form of $\GN$ with affinity coefficient $p$, 
then $f_2 = y'- y^2/p$ is an automorphic form of weight 4, and
$f_{n+1} = f_n' - (2n/p)yf_n, \, n \geq 2$ are automorphic forms of weight $2n+2$
of $\GN$.
\end{lemma}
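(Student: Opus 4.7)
The plan is to verify the transformation laws directly, using the quasi-automorphic behavior \eqref{ytransf} of $y(z)$ and the chain rule for $\gamma \in \GN \subset \mathrm{PSL}_2(\mathbb{R})$, for which $\gamma'(z) = (cz+d)^{-2}$ since $ad-bc = 1$. The base case and the inductive step have essentially the same structure, so I expect the whole proof to reduce to two rather mechanical calculations in which the specific coefficients $1/p$ (for $f_2$) and $2n/p$ (for $f_{n+1}$) are chosen precisely to cancel inhomogeneous error terms produced by the shift $pc(cz+d)$ in \eqref{ytransf}.

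First I would handle the base case. Differentiating the identity $y(\gamma(z)) = (cz+d)^2 y(z) + pc(cz+d)$ in $z$ and using $\frac{d}{dz}y(\gamma(z)) = y'(\gamma(z))(cz+d)^{-2}$, I get
\begin{equation*}
y'(\gamma(z)) = (cz+d)^4 y'(z) + 2c(cz+d)^3 y(z) + pc^2(cz+d)^2.
\end{equation*}
On the other hand, squaring \eqref{ytransf} and dividing by $p$ gives
\begin{equation*}
\tfrac{1}{p}y(\gamma(z))^2 = \tfrac{1}{p}(cz+d)^4 y(z)^2 + 2c(cz+d)^3 y(z) + pc^2(cz+d)^2.
\end{equation*}
Subtracting the second from the first, the two inhomogeneous terms cancel exactly, leaving $f_2(\gamma(z)) = (cz+d)^4 f_2(z)$. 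This already indicates that the inhomogeneity coefficient in \eqref{ytransf} must be $p$ in the $y^2$ normalization; any other choice would leave a residual term.

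For the inductive step, assume $f_n(\gamma(z)) = (cz+d)^{2n} f_n(z)$, which is a genuine weight-$2n$ automorphic transformation. Differentiating and again using $\gamma'(z) = (cz+d)^{-2}$ yields
\begin{equation*}
f_n'(\gamma(z)) = (cz+d)^{2n+2} f_n'(z) + 2nc(cz+d)^{2n+1} f_n(z).
\end{equation*}
Multiplying $f_n(\gamma(z))$ by $y(\gamma(z))$ and using \eqref{ytransf} gives
\begin{equation*}
\tfrac{2n}{p}\,y(\gamma(z))\,f_n(\gamma(z)) = \tfrac{2n}{p}(cz+d)^{2n+2} y(z) f_n(z) + 2nc(cz+d)^{2n+1} f_n(z).
\end{equation*}
Subtracting produces $f_{n+1}(\gamma(z)) = (cz+d)^{2n+2} f_{n+1}(z)$, confirming the claimed weight $2(n+1)$. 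The choice of the coefficient $2n/p$ is exactly what is needed to kill the $2nc(cz+d)^{2n+1} f_n$ term.

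The only genuine ``obstacle'' is bookkeeping: making sure that the inhomogeneous contribution to $f_n'(\gamma(z))$ from differentiating $(cz+d)^{2n}$ is balanced against the inhomogeneous contribution coming from the $pc(cz+d)$ term in the transformation of $y$. Since both are linear in $c(cz+d)^{2n+1}f_n(z)$, the match is automatic once the coefficient $2n/p$ is in place, so no further structure of $y$ (beyond \eqref{ytransf}) is used. Consequently the induction closes, and both assertions of the lemma follow.
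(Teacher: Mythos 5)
Your proof is correct and follows essentially the same route as the paper: differentiate the transformation law \eqref{ytransf} via the chain rule with $\gamma'(z)=(cz+d)^{-2}$, observe that the inhomogeneous terms cancel against those from $y(\gamma(z))^2/p$ (respectively $\tfrac{2n}{p}y(\gamma(z))f_n(\gamma(z))$), and close the induction. The paper only sketches the inductive step, which you have carried out explicitly and correctly.
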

\begin{proof}
Differentiating \eqref{ytransf}, one finds that  
$$y'(\gamma(z))=
(cz+d)^2\left((cz+d)^2y(z)+ pc(cz+d)\right)' = 
(cz+d)^4\left(y'(z) + \frac{2cy(z)}{cz+d} + \frac{pc^2}{(cz+d)^2}\right)$$
and also from \eqref{ytransf},
$$ y(\gamma(z))^2= (cz+d)^4\left(y(z)^2 + \frac{2pcy(z)}{cz+d}+
\frac{p^2c^2}{(cz+d)^2} \right) \,. $$
Then, by combining the two expressions above yields the desired transformation 
property for $f_2$. The rest follows by induction and use of \eqref{ytransf}. 
\end{proof}
The parametrization \eqref{wdef} of the gDH variables and \eqref{ydef}
yield the following expression for $y(z)$ in terms of $s(z)$ and its derivatives
\begin{equation}
y(z) = \frac{p}{2}\, \frac{\phi'(z)}{\phi(z)}\,, \quad \qquad 
\phi(z) = \frac{s'(z)}{(s-1)^{b_1}s^{b_2}}\,,
\label{ypara}
\end{equation}
with $b_1= 1-a_1/p$ and $b_2= 1-a_2/p$. Then all the higher
derivatives of $y$ can be expressed also in terms of $s(z), s'(z)$ and 
$s''(z)$ by differentiating \eqref{ypara} successively  and using the 
Schwarz equation \eqref{ischwarz} for $s(z)$. In particular, the 
differential polynomials $f_k$ introduced in Lemma~\ref{fk} are given by
\begin{equation}
f_n = -\frac{p}{2}\,V_n(s)s'(z)^n\,, \qquad V_{n+1}(s) = V_n'(s) + nq(s)V_n(s)\,,
\quad n\geq 2\,,
\label{Vk}
\end{equation}
where $V_n(s)$ are rational functions of $s$ defined recursively from
\begin{equation}
 V_2(s) = \half V(s) + q'(s) + \half q(s)^2 \,, \qquad
q(s) = \frac{b_1}{s-1}+\frac{b_2}{s}\,,
\label{V2}
\end{equation}
and where $V(s)$ is given in \eqref{ueqn}. Combining Lemma~\ref{fk} with
\eqref{Vk} for $n=2, 3, 4$, and with $f_2 \neq 0$, one can construct 
the following rational expressions in $y(z), y'(z), y''(z)$ and $y'''(z)$ 
\begin{equation}
 \frac{f_3^2}{f_2^3} = -\frac{2}{p}\frac{V_3(s)^2}{V_2(s)^3}\,, \qquad
\frac{f_4}{f_2^2} = -\frac{2}{p}\frac{V_4(s)}{V_2(s)^2}\,,
\label{I}
\end{equation}
which are rational in $s(z)$. Eliminating $s$ from the two equations
in \eqref{I} leads, in the general case, to a third order, nonlinear
equation, rational
in $y(z)$ and its derivatives, and which depends explicitly on the 
parameters $p, b_1, b_2$ (equivalently, $a_1, a_2, a_3$),
 and $\alpha, \beta, \gamma$.  
For suitable choices of these parameters, the third order equation 
constructed this way can be identified with a large set of nonlinear 
differential equations, and whose solutions $y(z)$ are given via 
\eqref{ypara}, in terms of the solutions of the Schwarzian equation 
\eqref{ischwarz}. Furthermore, it follows from Section 2 that all such
solutions will admit a natural boundary, and will be meromorphic in the
domain of existence D.  Of these third order equations,
only the special case of the Chazy equation will be considered
in this note, leaving the general classification problem for a future work. 

The Chazy equation \eqref{chazy} can be expressed simply as a 
polynomial equation in terms of the automorphic forms $f_2$ and $f_4$.
Indeed, from Lemma~\ref{fk}
$$ f_4 = y'''-\frac{12}{p}yy''+\frac{18}{p}y'^2 -\frac{24}{p}f_2^2\,.$$ 
Then imposing the constraint
$$ p := a_1+a_2+a_3 = 6 \,, $$ 
on the coefficients $a_i$ in \eqref{ydef} in the above 
expression for $f_4$, yields the alternative expression:\,
$f_4 + 4f_2^2 = 0$ for the Chazy equation \eqref{chazy}. Note that this
expression is equivalent to the vanishing of a certain automorphic form
of weight 8 associated with the Fuchsian group $\GN$ (see Section 4). 
Then the second equation in \eqref{I} implies that the Chazy equation must 
be equivalent to the condition
\begin{equation}
 V_4 = 12 V_2^2 \,,
 \label{chazy1}
\end{equation}
which needs to hold for {\it all} $s$. Condition \eqref{chazy1} imposes 
certain restrictions on the parameters $b_1, b_2$, and 
$\alpha, \beta, \gamma$ appearing in $V_2(s), V_4(s)$.
However, we impose further constraints on the parameters $b_1, b_2$ by
demanding that the meromorphic function $y(z)$ be in fact, {\it holomorphic}
in its domain of existence D. The reason for this additional condition 
is motivated from the known result (see e.g.~\cite{Chazy2}) that the general 
solution $y(z)$ of the Chazy equation obtained via the triangle function
$S(\half, \sf13, 0; z)$ is analytic on D although there exists particular 
(2-parameter family) solutions that are meromorphic but do not possess a natural
boundary. Since one of the main objectives of this paper is to obtain
parametrizations of the Chazy equation in terms of triangle functions,
we impose the holomorphicity of $y(z)$ a priori; then this leads to the specific 
choices for the parameters $\{\alpha, \beta, \gamma\}$ for the
triangle functions which parametrize 
the Chazy solution $y(z)$. These algebraic conditions will be systematically
investigated next.

In order to determine whether $y(z)$ defined in \eqref{ydef} is
holomorphic on D, it is necessary to analyze the singularities of the
gDH variables $w_i$ given in terms of $s(z)$ and its derivatives 
in \eqref{wdef}. It follows from the conformal mapping theory 
discussed in Section 2 that it is sufficient to examine the behavior
of the function $s(z)$ and its derivatives near
the vertices $z(0), z(1)$, and $z(\infty)$ of the fundamental
triangle T. The Schwarz reflection principle and the automorphic property then ensure 
that $s(z)$ will have the same behavior at the vertices of the reflected triangles in D.
\begin{lemma} \label{wi}
Let $s(z) = S(\alpha, \beta, \gamma; z)$ be the Schwarz triangle function of 
a Fuchsian group $\GN$ with fundamental triangle T whose interior angles at the
vertices $z(0), z(1), z(\infty)$ are respectively, $\alpha \pi, \beta \pi$ and
$\gamma \pi$. Furthermore, let the gDH variables $w_1, w_2, w_3$ be given 
by $s(z)$ and its derivatives as in \eqref{wdef}.
If $\{\alpha, \beta, \gamma\} = \{1/p_1, 1/p_2, 1/p_3\}, \, p_j \in \mathbb{Z}^+, 
\, j=1, 2, 3$, then $w_1, w_2, w_3$ have first order poles at each 
of the vertices with the following residue scheme:  
\begin{gather*}
\mathrm{Res}_{z=z(0)}\{w_1, w_2, w_3\} = \{-\half, \half(p_1-1), -\half\}\,, \qquad
\mathrm{Res}_{z=z(1)}\{w_1, w_2, w_3\} = \{\half(p_2-1), -\half, -\half\}\,, \\
\mathrm{Res}_{z=z(\infty)}\{w_1, w_2, w_3\} = \{-\half, -\half, \half(p_3-1)\}.
\end{gather*}
\end{lemma}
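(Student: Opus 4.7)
The plan is to analyze the singularities of each $w_i$ in \eqref{wdef} locally at each of the three vertices $z(0), z(1), z(\infty)$ of the fundamental triangle T by inserting the local expansions of $s(z)$ recorded in Section 2 and reading off the residues. Since each $w_i$ is a fixed linear combination of the three logarithmic derivatives $s''/s'$, $s'/s$, and $s'/(s-1)$, all that has to be computed at each vertex are the principal parts of these three quotients.

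At $z_0 = z(0)$, equation \eqref{szero} with $m = p_1$ gives $s(z) = (z-z_0)^{p_1}\phi_1(z)$ with $\phi_1(z_0) \neq 0$. Then
$$ \frac{s'(z)}{s(z)} = \frac{p_1}{z-z_0} + O(1), \qquad \frac{s''(z)}{s'(z)} = \frac{p_1-1}{z-z_0} + O(1), $$
while $s-1$ is holomorphic and nonvanishing at $z_0$, so $s'/(s-1)$ is holomorphic there. Substituting into $w_1 = \half(s''/s' - s'/s)$, $w_2 = \half(s''/s' - s'/(s-1))$, and $w_3 = \half(s''/s' - s'/s - s'/(s-1))$ reads off the residues $(-\half, (p_1-1)/2, -\half)$ by inspection. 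The computation at $z(1)$ is entirely analogous, with the roles of $s$ and $s-1$ interchanged.

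The vertex $z_0 = z(\infty)$ is the one requiring a bit of care. Here \eqref{spole} gives $s(z) = (z-z_0)^{-p_3}\phi_2(z)$ with $\phi_2(z_0) \neq 0$, and the point to notice is that $s-1$ has a pole of order $p_3$ at $z_0$ with the \emph{same} leading coefficient as $s$, so that
$$ \frac{s'(z)}{s(z)} = \frac{-p_3}{z-z_0} + O(1), \qquad \frac{s'(z)}{s(z)-1} = \frac{-p_3}{z-z_0} + O(1), \qquad \frac{s''(z)}{s'(z)} = \frac{-p_3-1}{z-z_0} + O(1). $$
Substituting into \eqref{wdef}, the two $-p_3/(z-z_0)$ contributions that cancelled the $s''/s'$ pole in $w_1$ and $w_2$ instead add in $w_3$, producing residues $(-\half, -\half, (p_3-1)/2)$ as claimed.

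The argument as a whole is just a careful residue-level bookkeeping near each vertex, and no step presents a real obstacle. The one conceptual subtlety worth flagging is the asymmetry at infinity: because $s-1$ behaves like $s$ rather than like a constant at $z(\infty)$, the pole that sits on $w_2$ at $z(1)$ and on $w_1$ at $z(0)$ migrates to $w_3$ at $z(\infty)$, which is precisely what produces the cyclic pattern in the residue scheme stated in the lemma.
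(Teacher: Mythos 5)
Your proposal is correct and is precisely the paper's own argument: the paper's proof reads, in full, ``Direct computation of $w_1, w_2, w_3$ in \eqref{wdef} using equations \eqref{szero} and \eqref{spole},'' and your residue bookkeeping at the three vertices (including the observation that $s-1$ shares the leading pole of $s$ at $z(\infty)$, shifting the nontrivial residue onto $w_3$ there) carries out exactly that computation with the correct results.
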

\begin{proof}
Direct computation of $w_1, w_2, w_3$ in \eqref{wdef}
using equations \eqref{szero} and \eqref{spole}.
\end{proof}
It follows from Lemma \ref{wi} that only possible singularities of  
$y(z)$ defined by \eqref{ydef} in the domain D, are first order poles at the 
vertices corresponding to the elliptic fixed points of $\GN$. (Note however that 
$y(z)$ has a dense set essential singularities at the boundary of D as is
the case for the function $s(z)$). Therefore, $y(z)$ will be holomorphic in the entire
domain D if the coefficients $a_i$ in \eqref{ydef} can be chosen such that the
residue at each pole vanishes. Consider first the case $\{\alpha, \beta, \gamma\} = 
\{1/p_1, 1/p_2, 1/p_3\}$, where $p_j \in \mathbb{Z}^+$. From part (i)
of Lemma \ref{wi}, the condition that the residue of $y(z)$ vanishes at each pole
$z(0), z(1), z(\infty)$, leads to a set of homogeneous, linear equations for $a_i$, namely,
\begin{align}
a_1 + (1-p_1)a_2+a_3 &= 0\,,  \nonumber\\
(1-p_2)a_1+a_2+a_3 &= 0\,,  \label{ai}\\
a_1+a_2+(1-p_3)a_3 &= 0\,.  \nonumber
\end{align}
For nontrivial solutions $a_i$, the vanishing of the determinant of the coefficient 
matrix in \eqref{ai} implies that $1/p_1+1/p_2+1/p_3 = \alpha + \beta + \gamma = 1$,
while the conformal mapping of the upper-half $s$-plane onto the hyperbolic
triangle requires that $\alpha + \beta + \gamma < 1$. 
Therefore, only two out of three equations in \eqref{ai} can be used to set
the residue equal to zero. Thus, for $y(z)$ to be holomorphic on D, one of the
three vertices must be a parabolic vertex at the boundary of D instead
of an elliptic vertex with a pole singularity. Hence, we have the following result.
\begin{proposition} \label{holo}
If a solution $y(z)$ of the Chazy equation \eqref{chazy} is parametrized by the 
triangle function $s(z)=S(\alpha, \beta, \gamma; z)$ defined on a domain D, then a necessary 
condition for $y(z)$ to be holomorphic on D is that at least one of the
parameters $\alpha, \beta, \gamma$ be zero.
\end{proposition}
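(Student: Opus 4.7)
The plan is to show that if all three of $\alpha,\beta,\gamma$ are reciprocals of positive integers (i.e., all three vertices of $\mathrm{T}$ are elliptic fixed points of $\Gamma$), then one is forced into a contradiction with the hyperbolic triangle inequality $\alpha+\beta+\gamma<1$ established in Section 2. Write $y(z)=a_1w_1+a_2w_2+a_3w_3$ as in \eqref{ydef} where the $w_i$ are the gDH variables \eqref{wdef} associated with $s(z)=S(\alpha,\beta,\gamma;z)$. By Lemma~\ref{wi}, in the fundamental triangle $\mathrm{T}$ the only singularities of the $w_i$ are simple poles at the vertices $z(0),z(1),z(\infty)$, with explicit residues depending on $p_1,p_2,p_3$. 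By the Schwarz reflection principle and the automorphic property of $s(z)$, the same local structure recurs at every $\Gamma$-image of these vertices, so these are indeed the only singularities of $y(z)$ inside $\mathrm{D}$.

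Next, I would demand holomorphicity of $y(z)$ on $\mathrm{D}$. Since the residue map is $\Gamma$-equivariant, it suffices to cancel the residue of $y(z)$ at each of the three vertices of $\mathrm{T}$. Using the residue scheme from Lemma~\ref{wi}, the vanishing conditions become exactly the homogeneous linear system \eqref{ai} for $(a_1,a_2,a_3)$, and a nontrivial solution exists if and only if the $3\times 3$ coefficient matrix is singular.

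The key computational step is then to expand the determinant of that matrix; I expect it to factor as a multiple of $p_1p_2p_3\bigl(1/p_1+1/p_2+1/p_3-1\bigr)$, so its vanishing is equivalent to $\alpha+\beta+\gamma=1$. This, however, is incompatible with the requirement $\alpha+\beta+\gamma<1$ needed for $s(z)$ to be a Schwarz triangle function mapping the upper half $s$-plane to a hyperbolic (circular) triangle, as recalled in Section 2. Hence no choice of $(a_1,a_2,a_3)\neq 0$ can kill all three residues simultaneously when all vertices are elliptic.

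The only way to reconcile holomorphicity with the hyperbolic constraint is therefore to lose one of the three residue equations, which happens precisely when at least one vertex is parabolic rather than elliptic, i.e.\ at least one of $\alpha,\beta,\gamma$ equals zero; a parabolic vertex sits on the natural boundary of $\mathrm{D}$ and so imposes no residue condition in the interior. This yields the stated necessary condition. The main subtlety to be careful about is the determinant computation in step three and verifying that Lemma~\ref{wi} genuinely accounts for \emph{all} interior singularities of $y$ in $\mathrm{D}$ (as opposed to only those in $\mathrm{T}$), which is where the automorphic transformation law \eqref{ytransf} and the tessellation property of $\Gamma$ enter decisively.
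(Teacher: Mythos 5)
Your proposal is correct and follows essentially the same route as the paper: impose vanishing of the residues from Lemma~\ref{wi} at all three vertices, obtain the homogeneous system \eqref{ai}, and observe that its determinant vanishes only when $\alpha+\beta+\gamma=1$, contradicting the hyperbolicity condition $\alpha+\beta+\gamma<1$ (and indeed the determinant equals $p_1p_2p_3(1-\alpha-\beta-\gamma)$, confirming your expected factorization). The paper draws the same conclusion that one vertex must therefore be parabolic, i.e.\ one of $\alpha,\beta,\gamma$ must be zero.
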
 
Here we emphasize that it is indeed possible for the solutions $y(z)$ of 
other nonlinear equations obtained from the general construction outlined 
below \eqref{I}, to be meromorphic
in its domain of existence D. For example, the general solutions of \eqref{chazy12}
for $6 < k \in \mathbb{Z}$ are given in terms of the triangle functions 
$S(\half, \sf{1}{3}, \sf{1}{k}; z)$, and are meromorphic in D with poles
at the elliptic vertices which have interior angles $\pi/k$~\cite{Chazy2}. 
We will consider
the parametrization of \eqref{chazy12} and other equations in a future study.

According to Proposition \ref{holo}, at least one of the vertices of the fundamental
triangle T must be a parabolic fixed point of the automorphism group $\GN$.
Thus there are three distinct cases which are considered below.
For reasons that will be clear in Section 4, $z(0)$ is always chosen to be a
parabolic vertex in each of these cases so that the corresponding triangle 
functions are of the form $S(0, \beta, \gamma; z)$.

{\it 2 elliptic and 1 parabolic vertices}:\, As indicated above, 
it suffices to choose  
$z(0)$ to be the parabolic vertex, and $z(1), z(\infty)$ as the elliptic vertices 
so that $\{\alpha, \beta, \gamma\} = \{0, 1/p_2, 1/p_3\}$. All other sub cases 
can be generated from permutations of $\{z(0), z(1), z(\infty)\}$.
Requiring the residue of $y(z)$ to be zero at each of the poles $z(1), z(\infty)$ yields
the last two equations of \eqref{ai}. From these, one easily deduces that
$a_1 = p/p_2 = \beta p$ and $a_3 = p/p_3 = \gamma p$ so that $a_2/p = 1 - \beta - \gamma$.
Moreover, the parameters $b_1, b_2$
in \eqref{ypara} are given by $b_1 = 1-a_1/p = 1-\beta$ and $b_2 = 1-a_2/p= \beta + \gamma$.
Equations \eqref{V2} and \eqref{Vk} then deliver the explicit forms for $V_2(s)$
and $V_4(s)$, which depend on the remaining parameters $\beta, \gamma$. They are given by
\begin{gather*}
V_2 = \frac{A}{s^2(s-1)}\,, \qquad \mbox{with} 
\qquad A = \frac{-\,(1-\beta-\gamma)^2}{2}\, \qquad \quad \mbox{and}\,, \\
V_4 = A\,\frac{(1-2\gamma)(1-3\gamma)s^2 + 
[(1-\beta-\gamma)(12\gamma-7)+(1-2\gamma)]s + 6(1-\beta-\gamma)^2}{s^4(s-1)^3}\,.
\end{gather*}
Finally, inserting the above expressions in 
\eqref{chazy1} and requiring that the resulting expression 
$$ s[(1-2\gamma)(1-3\gamma)s +(1-\beta-\gamma)(6\gamma-6\beta-1) + (1-2\gamma)] = 0 $$
holds for all $s \notin \{0, 1, \infty\}$, give rise to two distinct solutions: 
$$ \{\alpha, \beta, \gamma\} = \{0, \half, \sf13\}\,, \qquad 
\{\alpha, \beta, \gamma\} = \{0, \sf13, \sf13\}\,.$$

{\it 1 elliptic and 2 parabolic vertices}:\, Again, without any loss of generality,
the elliptic vertex can be chosen as $z(1)$ such that 
$\{\alpha, \beta, \gamma\} = \{0, 1/p_2, 0\}$. From
Lemma \ref{wi}, $y(z)$ is holomorphic in T except at the vertex $z(1)$
where it has a simple pole. Vanishing of the residue at the pole $z(1)$ 
gives the second equation in \eqref{ai}, which implies that
$a_1 = p/p_2 = \beta p$ and $b_1 = 1-a_1/p= 1-\beta$. In this case,
$b_2$ and $\beta$ are the two remaining parameters in $V_2(s)$ and $V_4(s)$,
which take the forms
 $$ V_2 = \frac{(1-b_2)^2}{2s^2}+ \frac{B}{s(s-1)}\,, \qquad 
V_4 = \frac{N(s)}{s^4(s-1)^3}\,,$$
where $B:=(1-\beta)(2b_2-\beta-1)/2$ and $N(s)$ is a cubic polynomial
whose coefficients depend on $\beta, b_2$. Then 
\eqref{chazy1} implies that $C_0 - C_1s(s-1)=0$ for all 
$s \notin \{0, 1, \infty\}$, where  
$$ C_0 = \frac{(1-\beta)(1-2\beta)(1-3\beta)}{2}\,, \quad
C_1 = (1-\beta)(2-3\beta)(1-b_2)^2+B[6(1+\beta^2)-5(b_2+\beta)]\,,$$
and $B$ is defined above. Setting $C_0 = 0 = C_1$, three distinct solutions are found:
\begin{gather*}
(i)\, \{\alpha, \beta, \gamma\}=\{0, \sf12, 0\}\,, \quad b_2 \in \{\sf23, \sf56\};
\qquad (ii)\, \{\alpha, \beta, \gamma\} = \{0, \sf13, 0\}\,, 
\quad b_2 \in \{\sf12, \sf56\}; \\
(iii)\, \{\alpha, \beta, \gamma\} = \{0, \sf23, 0\}\,, \quad b_2 = \sf56 \,.
\end{gather*}
Notice that in case (iii), $\beta$ is {\it not} reciprocal of a positive integer,
hence $S(0, \sf23, 0; z)$ is not a single valued function of $z$.
 
{\it 3 parabolic vertices}:\, Here, 
$\{\alpha, \beta, \gamma\} = \{0, 0, 0\}$ so that $y(z)$ is holomorphic on D
for any choice of the parameters $a_i$. The functions $V_2(s)$ and $V_4(s)$
depend on the parameters $b_1, b_2$ which are determined from \eqref{chazy1}
by requiring that $y(z)$ satisfies the Chazy equation \eqref{chazy}.
The calculations are tedious but similar to the previous two cases, and there are 
two distinct solutions: $b_1 = b_2 = \sf23$, and $b_1 = b_2 = \sf56$.

The triangle functions $S(\alpha, \beta, \gamma; z)$ corresponding to the three cases
considered above are the ones which give holomorphic solutions of the 
Chazy equation \eqref{chazy} via \eqref{wdef} and \eqref{ydef}. These results are 
summarized in Table 1 below. Recall that $a_1+a_2+a_3 := p =6$, and that
$y = 3\phi'(z)/\phi(z)$ as given in \eqref{ypara}. The automorphic groups $\GN$ 
corresponding to the triangle functions are listed in the second column of the table.
With the exception of Case 5, all others are subgroups of the modular group 
$\GN(1)$ (see e.g., \cite{Rankin1} for notation);
these will be discussed in the following section. The triangle function in Case 5
is not a simple automorphic function of a Fuchsian group of first kind since 
the exponent difference at the vertex $z(0)$
given by $\beta = \frac23$ is {\it not} reciprocal of a positive integer.
Hence $S(0, \frac23, 0; z)$ is not a single valued function of $z$ on the domain D.
Nevertheless, one can show from the conformal mapping properties of the triangle 
functions that the single-valued function $S(0, \frac12, \frac13; z)$
can be expressed as a degree-2 rational function of $S(0, \frac23, 0; z)$~\cite{Nehari}, 
namely
$$ S(0, \sf12, \sf13; \epsilon z) = 
\frac{-4S(0, \sf23, 0; z)}{[S(0, \sf23, 0; z)-1]^2} \,, $$
with $\epsilon = \sqrt[3]{-\sf14}$. Then $\phi(z)$ in case 5 is a constant multiple 
of the $\phi(\epsilon z)$ in case 1; 
hence, $y(z)$ obtained from $S(0, \sf23, 0; z)$ is a single-valued function.

\begin{table}[h]
\begin{center}
\begin{tabular}{|c|c|c|c|c|} \hline
Case & $\GN$ & $s(z) = S(\alpha, \beta, \gamma; z)$ & $y=a_1w_1+a_2w_2+a_3w_3$ &
$\phi(z)=s'(z)/(s-1)^{b_1}s^{b_2}$ \\ \hline
1 & $\GN(1)$ & $S(0, \half, \sf13; z)$ & $y=3w_1+w_2+2w_3$ &
$\phi = s'/(s-1)^{1/2}s^{5/6}$ \\ \hline
2 & $\GN^2$ & $S(0, \sf13, \sf13; z)$ & $y=2w_1+2w_2+2w_3$ &
$\phi = s'/(s-1)^{2/3}s^{2/3}$ \\ \hline
3 & $\GN_0(2)$ & $S(0, \half, 0; z)$ &\begin{tabular}{c}
(i)\, $y=3w_1+2w_2+w_3$ \\
(ii)\, $y=3w_1+w_2+2w_3$
\end{tabular}
&\begin{tabular}{c}
(i)\, $\phi = s'/(s-1)^{1/2}s^{2/3}$ \\
(ii)\, $\phi = s'/(s-1)^{1/2}s^{5/6}$
\end{tabular} \\ \hline
4 & $\GN_0(3)$ & $S(0, \frac13, 0; z)$ &\begin{tabular}{c}
(i)\, $y=2w_1+3w_2+w_3$ \\
(ii)\, $y=2w_1+w_2+3w_3$
\end{tabular}
&\begin{tabular}{c}
(i)\, $\phi = s'/(s-1)^{2/3}s^{1/2}$ \\
(ii)\, $\phi = s'/(s-1)^{2/3}s^{5/6}$
\end{tabular} \\ \hline
5 &  & $S(0, \frac23, 0; z)$ & $y=4w_1+w_2+w_3$ &
$\phi = s'/(s-1)^{1/3}s^{5/6}$ \\ \hline
6a & $\GN_0(4)$ & $S(0, 0, 0; z)$ & $y=w_1+w_2+4w_3$ &
$\phi = s'/(s-1)^{5/6}s^{5/6}$ \\ \hline
6b & $\GN(2)$ & $S(0, 0, 0; z/2)$ & $y=2(w_1+w_2+w_3)$ &
$\phi = s'/(s-1)^{2/3}s^{2/3}$ \\ \hline
\end{tabular}
\end{center}
\kern-\medskipamount
\caption{Triangle functions associated with the Chazy solution $y(z)$}
\label{t:chazy}
\end{table}
Note that in each of the cases 3 and 4, the two different forms of $y(z)$ can be transformed
to each other by interchanging $w_2$ and $w_3$. This transformation stems from the permutation 
of the two parabolic vertices $z(0)$ and $z(\infty)$ of the fundamental triangle T, thereby
inducing the inversion map $s \to s^{-1}$ of the function field $s(z)$. Under this
involution, $\phi(z)$ corresponding to the sub cases (i) and (ii) are constant multiple of 
each other, for both cases 3 and 4. In case 6, there are two automorphism groups which are 
conjugate to each other:\, $\GN_0(4) = g\GN(2)g^{-1}$ where 
$g = \big(\begin{smallmatrix} 1 & 0 \\ 0 & 2\end{smallmatrix}\big)$, and
the fundamental domain of $\GN(2)$ is mapped under $z \to 2z$ to the fundamental domain 
of $\GN_0(4)$. The canonical automorphic function for $\GN(2)$ is the elliptic modular function 
$\lambda(z)$, which is related to the triangle function of $\GN_0(4)$ as 
$ S(0, 0, 0; z) = \lambda(2z)[\lambda(2z)-1]^{-1}$.
From the above relation, it turns out that the $\phi(z)$ in case 6b can be transformed 
to that in case 6a due to the well-known (see e.g., \cite{Nehari}) functional relation:\,
$\lambda(z) = 4\sqrt{\lambda(2z)}[1+\sqrt{\lambda(2z)}]^{-2}$.
Thus we see that there are many different representations of the solution of the 
Chazy equation in terms of Schwarzian triangle functions which in turn satisfy 
Schwarzian equations \eqref{ischwarz}. As we have seen, each of these equations 
can be linearized via \eqref{ueqn}. We turn to this topic next.

\begin{center}
{\bf 4. Parametrization of the Chazy solutions} 
\end{center}
In the previous section, explicit solutions of the Chazy equation were
presented in terms of the triangle functions listed in Table 1.
The standard expressions of the triangle functions are usually via 
the theta or Dedekind's eta functions which admit
Fourier or $q$-series expansions (as in \eqref{szeroq} or \eqref{spoleq}) 
developed in the neighborhood of the parabolic vertex $z_o=i\infty$ 
(see e.g., ~\cite{Maier1} and references therein). 
It is however more convenient to express the Chazy solution $y(s(z))$ 
{\it implicitly}, that is, in 
terms of the variable $s$ and a solution $u(s)$ of the linear equation \eqref{ueqn}.
It is then possible to treat the nonlinear Chazy equation purely on the basis of
the classical theory of linear Fuchsian differential equations. This is the main purpose
of the present section. Yet another motivation is to relate our results to Ramanujan's
work on the various parametrization of his functions $P, Q, R$, as mentioned in
the introduction.

In this section, the domain of the automorphic functions $s(z)$ will be taken
as D $= \UH$, the upper-half complex plane, and the hypergeometric form of 
the Fuchsian differential equation \eqref{ueqn} will be considered, in order 
to make contact with standard literature.
If $u(s)$ is a solution of \eqref{ueqn}, then the function
\begin{equation}
 \chi(s) = s^{(\alpha-1)/2}(s-1)^{(\beta-1)/2}u(s)
\label{chi}
\end{equation}
satisfies the hypergeometric equation
\begin{subequations}
\begin{equation}
\chi''+\Big(\frac{1-\alpha}{s}+\frac{1-\beta}{s-1}\Big)\chi'+ 
\frac{(\alpha+\beta-1)^2-\gamma^2}{4s(s-1)}\chi=0 \,,
\label{hyper-1}
\end{equation}
which can be written in more standard form as  
\begin{equation}
s(s-1)\chi''+[(a+b+1)s-c]\chi'+ab\chi=0,
\label{hyper-2}
\end{equation}
\end{subequations}
where $a=(1-\alpha-\beta-\gamma)/2$, $b=(1-\alpha-\beta+\gamma)/2$,
and $c=1-\alpha$. The transformation \eqref{chi} sets one of the exponents 
to 0 at each of the singular points $s=0, 1$, but the exponent differences
as well as the ratio $z(s)$ of any two linearly independent solutions, remain
the same as those in \eqref{ueqn}. Consequently,
the conformal mapping and the construction of the triangle function described in Section 2
can be carried out in an identical manner by employing the classical theory of
the hypergeometric equation instead of \eqref{ueqn}. 
Let $\{\chi_1, \chi_2\}$ be a pair of linearly independent solutions of \eqref{hyper-1} 
or \eqref{hyper-2}, and set $z(s) = \chi_2(s)/\chi_1(s)$, then 
$s'(z) = 1/z'(s)=\chi_1^2/W(\chi_1, \chi_2)$
where $W(\chi_1, \chi_2) = Cs^{\alpha-1}(s-1)^{\beta-1}$ is the Wronskian, and $C \neq 0$ is
a constant depending on the chosen pair of solutions $\{\chi_1, \chi_2\}$.
Use of this expression for $s'(z)$ in \eqref{ypara} provides an implicit parametrization 
for $y(z)$ in terms of $\chi_1$ and its $s$-derivative, given by
\begin{equation}
y(z(s)) = \frac3C s^{-\alpha}(s-1)^{-\beta}
\Big(2s(s-1)\chi_1\chi_1'-[(\tilde{b}_1+\tilde{b}_2)s-\tilde{b}_2]\chi_1^2\Big) \,, 
\label{yimp}
\end{equation}
with $\tilde{b}_1=b_1+\beta-1$ and $ \tilde{b}_2 = b_2+\alpha-1$. 
In the following, we construct the triangle functions $s(z)$ for all the
cases listed in Table 1 as well as the corresponding Chazy solution $y(z)$ using
\eqref{yimp}. In each case, the pair of hypergeometric solutions $\{\chi_1, \chi_2\}$  
are so chosen that the conformal map results in a fundamental region T which
has the parabolic vertex $z(0) = i\infty \in \UH$, around which suitable $q$-expansions
for $s(z)$ are developed. In addition, one imposes boundary conditions on $y, y', y''$
in a consistent manner at this vertex in order to uniquely fix a special solution $y(z)$,
which is then shown to be related to Ramanujan's $P(q)$ via \eqref{yP}.
The ensuing results then
relate to Ramanujan's parametrization for his theories of signatures 2,3,4, and 6,
as mentioned in Section 1. To the best of our knowledge the cases from Table 1
that were not recorded by Ramanujan, correspond to Case 1 which was known by
Chazy himself ~\cite{Chazy2,Chazy3}, and Case 2. The representation of $y(z)$
via the triangle function $S(0, \sf13, \sf13; z)$ in case 2 can be found in 
Ref.~\cite{Takhtajan}.

\begin{center}
{\it 4.1. Case 1}
\end{center}
Here $\alpha = 0, \beta = \half$ and $\gamma = \sf13$.
The corresponding group of automorphisms, $\GN$, is generated by rotations about
the vertices $z(1)$ and $z(\infty)$ by $\pi$ and $2\pi/3$, respectively, and
a parabolic transformation stabilizing the vertex $z(0)$. It is known 
(see e.g.~\cite{Ford}) that the projective action of $\GN$ is isomorphic to 
that of the modular group 
$\GN(1) := \SL$ acting on the upper-half plane $\UH$ via fractional linear 
transformations according to
\begin{equation*}
z \to \gamma(z) = \frac{az+b}{cz+d} \,, \qquad
\gamma = \begin{pmatrix} a & b \\ c & d\end{pmatrix} \in \SL\,, 
\end{equation*}
and which is generated by the fundamental transformations:
$$ z \to z+1 \quad \mbox{(translation)}\,, \qquad z \to \frac{-1}{z} \quad 
\mbox{(inversion)}\,.$$
It is also customary to choose the fundamental triangle T in $\UH$ such
that it is a strip parallel to the imaginary axis, bounded by the unit circle 
$|z|=1$, and the vertices are located at $z(0) = i\infty,\, z(1)=i,\,
z(\infty) = e^{2 \pi i/3}$.
To construct the triangle function $S(0,\half,\sf13;z)$ on T, one starts with
the hypergeometric equation (cf. \eqref{hyper-1}, \eqref{hyper-2})
\begin{equation}
\chi'' + \left(\frac{1}{s}+\frac{1}{2(s-1)}\right)\chi' + 
\frac{5/144}{s(s-1)}\chi = 0 \,,
\label{hyper-case1}
\end{equation}
with parameters $a=\sf{1}{12}, \, b=\sf{5}{12}$ and $c=1$. Equation 
\eqref{hyper-case1} admits a one-dimensional space of single-valued 
solutions spanned by the hypergeometric function 
$_2F_1(\sf{1}{12}, \sf{5}{12};1;s)$ that is 
holomorphic in a neighborhood of $s=0$ and is normalized to unity there.
The second independent solution of \eqref{hyper-case1} is chosen
as $_2F_1(\sf{1}{12}, \sf{5}{12};\half;1-s)$, which is holomorphic in a 
neighborhood of $s=1$. Next, define the conformal map $z(s)$ as follows
\begin{equation}
z(s)= \frac{\chi_2}{\chi_1}\,, \qquad \chi_1 =\, _2F_1(\sf{1}{12}, \sf{5}{12};1;s)\,,
\quad \chi_2 = A\,\,_2F_1(\sf{1}{12}, \sf{5}{12};\half;1-s) + B\chi_1\,,
\label{z-case1}
\end{equation}
where $A, B$ are constants to be determined by fixing the vertices
of T as specified above. The analytic
continuation of $_2F_1(\sf{1}{12}, \sf{5}{12};\half;1-s)$ into the neighborhood 
of $s=0$ is given by (see e.g.~\cite{Bateman}),
$$ _2F_1(\sf{1}{12}, \sf{5}{12};\half;1-s) = \frac{\GN(\half)}{\GN(\sf{1}{12})\GN(\sf{5}{12})}
\left[-\log(s)_2F_1(\sf{1}{12},\sf{5}{12};1;s) + 
\sum_{n=0}^{\infty}\frac{(\sf{1}{12})_n(\sf{5}{12})_n}{(n!)^2}\,h_ns^n \right]\,,
$$
where $\GN(\cdot)$ are Gamma functions, $h_n:=2\psi(1+n)-\psi(\sf{1}{12}+n)-\psi(\sf{5}{12}+n)$,
and $\psi(\cdot) := \GN'(\cdot)/\GN(\cdot)$ are the Digamma functions. 
Hence from \eqref{z-case1},
$$ z(s)=A\,\frac{_2F_1(\sf{1}{12},\sf{5}{12};\half;1-s)}
{_2F_1(\sf{1}{12},\sf{5}{12};1;s)}+B
= A_1(-\log(s) + h(s)) + B\,, \qquad 
A_1 = A\,\frac{\GN(\half)}{\GN(\sf{1}{12})\GN(\sf{5}{12})}\,, $$
near $s=0$, and $h(s)$ is holomorphic in that neighborhood. The constant $A$ is 
determined by fixing $A_1 = i/2 \pi$ above, so that $z(s) \to i\infty$ as $s \to 0^+$, 
and $z \to z+1$ onto the next branch as $s$ makes a circuit around $s=0$. The constant 
$B$ can then be determined by demanding that $\displaystyle\lim_{s \to 1}z(s) = i$. 
This sets the elliptic vertex $z(1) = i$, and yields $B=-i$ from the expression
of $z(s)$ above, after using $_2F_1(\sf{1}{12},\sf{5}{12};1;1) = 
\GN(\half)[\GN(\sf{7}{12})\GN(\sf{11}{12})]^{-1}$ and the formula 
$\GN(x)\GN(1-x) = \pi \csc(\pi x)$.
Consequently, the function $q := e^{2 \pi iz}$ has the power series representation
$ q = se^{-h(s)+ 2\pi} = B_1 s (1 + a_1s + a_2s^2 + \ldots)$, which can be inverted
to obtain a $q$-series representation of the triangle function 
$s(z)=S(0,\sf12, \sf13; z)$
in the neighborhood of the parabolic vertex $q=0 \,(z = i\infty)$.
In particular, the constant $B_1 = e^{2\pi-h(0)}$ can be evaluated from the
the analytic continuation formula of $ _2F_1(\sf{1}{12}, \sf{5}{12};\half;1-s)$ 
given above. One easily checks that 
$h(0) = h_0 := 2\psi(1) - \psi(\sf{1}{12}) - \psi(\sf{5}{12})$ which
evaluates to $h_0 = 2 \pi + 3\log 12$ using Gauss's Digamma 
formula (see~\cite{Bateman}), and thus $B_1 = 1/1728$. The first few terms of the 
$q$-expansion of the triangle function is then given by
$$ S(0,\sf12, \sf13; z) = 1728q(1-744 q +356652 q^2 + \ldots)\,.$$
Using the analytic continuations of the pair $\{\chi_1, \chi_2\}$
to a neighborhood of $s=\infty$, and from the values of the constants 
$A,\, B$ obtained above, it can be verified that $z(\infty) = e^{2 \pi i/3}$.
We note here that $S(0, \half, \sf13; z)$ is the reciprocal of the well known
$J$ invariant associated with the modular group $\GN(1)$.

In order to derive the implicit parametrization $y(z(s))$ for the Chazy solution,
one needs to calculate the Wronskian $W(\chi_1,\, \chi_2)$ of the two solutions
specified in \eqref{z-case1}, of the hypergeometric equation \eqref{hyper-case1}.
A short calculation employing the analytic continuation formula for 
$_2F_1(\sf{1}{12},\sf{5}{12};\half;1-s)$ near $s=0$ gives
$$ W(\chi_1,\, \chi_2) = \frac{i}{2 \pi}[W(\chi_1,\,h(s)) -\frac{\chi_1^2}{s}] =
Cs^{-1}(s-1)^{-\half} \,,$$
where $h(s)$ is analytic near $s=0$, and the last expression on the right
follows from Abel's formula. Hence, letting $s \to 0^+$ in above, yields the 
constant $C=1/2\pi$. Then from \eqref{yimp} with $b_1=\half$ and $b_2=\sf56$ 
(cf. Table 1), the following parametrization is obtained
\begin{equation}
y(z(s)) = \pi i (1-s)^{\half}(\chi_1^2 + 12s\chi_1\chi_1')\,, \qquad
\chi_1(s) = \,_2F_1(\sf{1}{12},\sf{5}{12};1;s)\,.
\label{yimp-case1}
\end{equation}
Recall from Section 1 that Ramanujan's modular function $P(q)$ introduced
in \eqref{PQR} is related to the Chazy solution via $y(z) = \pi i P(q)$.
Therefore, from \eqref{yimp-case1} it is possible to obtain an implicit
parametrization for Ramanujan's $P, Q, R$ in terms of the solutions of
the hypergeometric equation \eqref{hyper-case1}.
\begin{proposition} \label{Ppara-1}
Let $z(s)$ be the quotient of hypergeometric solutions $\chi_1$ and $\chi_2$ 
defined in \eqref{z-case1}, and $q = e^{2 \pi i z(s)}$, then
$$
P(q) = (1-s)^{\half}(\chi_1^2 + 12s\chi_1\chi_1')\,, \qquad
Q(q) = \chi_1^4\,, \qquad R(q) = (1-s)^{\half}\chi_1^6 \,. $$
\end{proposition}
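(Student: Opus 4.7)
The plan is to leverage Ramanujan's differential system \eqref{dPQR} to bootstrap from $P$ to $Q$ to $R$, reducing everything to the automorphic form calculus set up in Section 3 and the explicit Wronskian obtained in Case 1.

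For $P(q)$, there is essentially nothing to do: combining \eqref{yimp-case1} with the normalization $y(z) = \pi i P(q)$ from \eqref{yP} yields $P(q) = (1-s)^{\half}(\chi_1^2 + 12 s \chi_1 \chi_1')$ on the nose.

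For $Q(q)$, I would rewrite the first Ramanujan relation $12\delta P = P^2 - Q$ in terms of $y$. Using $y = \pi i P$ and $\delta = (2\pi i)^{-1} d/dz$, one obtains $\pi^2 Q = 6y' - y^2 = 6 f_2$, where $f_2 = y' - y^2/6$ is exactly the weight-$4$ automorphic form of Lemma \ref{fk} specialized to $p=6$. Now I would invoke \eqref{Vk}, namely $f_2 = -3 V_2(s) s'(z)^2$, using the Case-1 value $V_2 = -1/(72 s^2(s-1))$ already computed in Section 3. The Wronskian $W(\chi_1,\chi_2) = C s^{-1}(s-1)^{-\half}$ with $C = 1/(2\pi)$, established in the text preceding \eqref{yimp-case1}, gives $s'(z) = \chi_1^2/W = 2\pi s (s-1)^{\half}\chi_1^2$, hence $s'(z)^2 = 4\pi^2 s^2 (s-1)\chi_1^4$. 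Substituting collapses the powers of $s$ and $s-1$ and leaves $f_2 = (\pi^2/6)\chi_1^4$, so $Q = \chi_1^4$.

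For $R(q)$, I would apply the same idea to the second Ramanujan relation, $3\delta Q = PQ - R$, i.e.\ $R = PQ - 3\delta Q$. Differentiating $Q = \chi_1^4$ via the chain rule, $\delta Q = (2\pi i)^{-1}\cdot 4\chi_1^3 \chi_1'\, s'(z)$, and reusing $s'(z) = 2\pi s(s-1)^{\half}\chi_1^2$, one obtains an expression for $3\delta Q$ proportional to $s(s-1)^{\half}\chi_1^5\chi_1'$. On the other hand, $PQ = (1-s)^{\half}\chi_1^6 + 12 s(1-s)^{\half}\chi_1^5\chi_1'$. The two $\chi_1^5\chi_1'$ terms cancel once the branch identity $(s-1)^{\half} = i(1-s)^{\half}$ is tracked consistently with the choices made in defining $z(s)$ on $\UH$, leaving $R = (1-s)^{\half}\chi_1^6$. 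The third equation $2\delta R = PR - Q^2$ is then automatic since $P$ satisfies the Chazy equation, which is equivalent to the system \eqref{dPQR}.

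The main obstacle I anticipate is bookkeeping: keeping the square-root branches of $(s-1)^{\half}$ and $(1-s)^{\half}$ consistent with the orientation of the conformal map $z(s)$ and with the specific choice $C = 1/(2\pi)$ for the Wronskian constant. I would double-check the result by comparing the leading terms of the $q$-expansions $P = 1 - 24q - \cdots$, $Q = 1 + 240 q + \cdots$, $R = 1 - 504 q + \cdots$ against the expansions obtained from $\chi_1 = 1 + (5/144) s + \cdots$ and $s = 1728 q + \cdots$ derived in the text; this fixes all branch ambiguities unambiguously.
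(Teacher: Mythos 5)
Your computations of $Q$ and $R$ are correct and, in substance, identical to the paper's: the recursions $Q = P^2 - 12\delta P$ and $R = PQ - 3\delta Q$ are exactly the identifications $Q = \frac{6}{\pi^2}f_2$ and $R = \frac{9}{(i\pi)^3}f_3$ from Lemma~\ref{fk}, and your evaluation of $f_2 = -3V_2(s)s'(z)^2$ via \eqref{Vk}, $V_2 = -1/(72\,s^2(s-1))$ and the Wronskian reproduces what the paper gets by differentiating \eqref{yimp-case1} directly; the branch choice $(s-1)^{1/2} = i(1-s)^{1/2}$ is indeed forced by the determination of $C = 1/2\pi$ from the $s\to 0^+$ limit of the Wronskian, so that bookkeeping closes.

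The one genuine gap is at your first step. You declare the identity for $P$ to hold ``on the nose'' by citing \eqref{yP}, but \eqref{yP} only defines the particular Chazy solution attached to Ramanujan's series \eqref{PQR}, whereas $y(z(s))$ in \eqref{yimp-case1} is an a priori different object: the Chazy solution built from $S(0,\frac12,\frac13;z)$ with the specific normalizations $A_1 = i/2\pi$, $B=-i$, $C = 1/2\pi$. The content of the proposition is precisely that these two solutions coincide, and comparing finitely many leading $q$-coefficients (your proposed ``double-check'') cannot by itself prove equality of two functions. The paper closes this by showing that the triple $\{y/\pi i,\ \frac{6}{\pi^2}f_2,\ \frac{9}{(i\pi)^3}f_3\}$ satisfies the system \eqref{dPQR}, that each member tends to $1$ as $s\to 0$ (equivalently $q \to 0$) just as $P,Q,R$ do, and then invoking uniqueness of solutions of \eqref{dPQR} with those limiting values. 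You need to make that uniqueness step explicit; once you do, your leading-order comparison at $q=0$ becomes the verification of the initial data rather than a heuristic consistency check, and the rest of your argument goes through unchanged.
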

\begin{proof} 
First, from \eqref{PQR} note that the Ramanujan functions satisfy the 
conditions $P \to 1, \, Q \to 1, \, R \to 1$ as $q \to 0$. Next, from \eqref{chazy}
and using the forms $f_2, f_3$ from Lemma \ref{fk} it is easy to verify that the triple 
$\{y/\pi, \, \sf{6}{\pi^2}f_2,\,\sf{9}{(i\pi)^3}f_3\}$ satisfies Ramanujan's differential
system \eqref{dPQR}. On the other hand, it follows from \eqref{yimp-case1} that
$y/\pi i \to 1$ as $s \to 0$ (equivalently, $q \to 0$). Moreover, differentiating
$y(z)$ in \eqref{yimp-case1} successively and using $s'(z) = \chi_1^2/W(\chi_1, \chi_2)$,
one obtains the expressions $\sf{6}{\pi^2}f_2 = \chi_1^4$ and 
$\sf{9}{(i\pi)^3}f_3 = \sqrt{1-s}\,\chi_1^6$, which satisfy the {\it same} 
conditions as $Q$ and $R$, respectively when $q \to 0$. Therefore, uniqueness of solutions
of the differential system \eqref{dPQR} yields the desired result.  
\end{proof}
\noindent Substituting the $q$-expansion for $S(0, \half, \sf13; z)$ into 
the hypergeometric series for $\chi_1$, it is possible to recover from the above
parametrizations the $q$-series for $P, Q, R$ in \eqref{PQR}.
On the other hand, Proposition~\ref{Ppara-1} provides an elegant representation for the
the triangle function $S(0, \half, \sf13; z)$ as well as a remarkable identity, namely
$$ S(0, \half, \sf13; z) = \frac{Q^3-R^2}{Q^3} \,, \qquad
Q^{\sf14} = \,_2F_1(\sf{1}{12},\sf{5}{12};1; \sf{Q^3-R^2}{Q^3}) \,.$$
The first expression can be used together with \eqref{PQR} to obtain the $q$-expansion
for $S(0, \half, \sf13; z)$ derived above. 

The transformation property \eqref{ytransf} for $y(z)$ under the action
of the modular group $\GN = \GN(1)$ implies that $P = y/\pi i$ is a 
{\it quasi-modular} form of weight 2 and affinity coefficient 
$p=6$ on $\GN(1)$. Moreover, by comparing
\eqref{dPQR} with the result of Lemma~\ref{fk}, it follows that $Q = \sf{6}{\pi^2}f_2$
and $R = \sf{9}{(\pi i)^3}f_3$ are {\it modular} forms of weight 4 and 6 respectively,  
on $\GN(1)$. In fact, $P=E_2, \, Q=E_4$ and $R=E_6$, where $E_k(q)$ is the 
normalized Eisenstein series of weight $k$ for the modular group $\GN(1)$. $E_k$
is a holomorphic modular form, and is defined (for even positive integer $k$) as 
\begin{equation}
E_k(q)=1-\frac{2k}{B_k} \sum_{n=1}^{\infty}\sigma_{k-1}(n)q^n\,, \qquad
q = e^{2 \pi i z}\,, \,\, z \in \UH\,,
\label{Ek}
\end{equation}
where $B_k$ is the $k$th Bernoulli number and $\sigma_k(n)$ is the sum-of-divisor
function introduced in Section 1. Detailed discussions of modular forms
appear in several monographs (see e.g.~\cite{Rankin1,Schoeneberg}.
The vector space of holomorphic modular forms
of weight $k$ on $\GN(1)$ is denoted by $M_k(\GN(1))$. Proposition~\ref{Ppara-1} 
provides a parametrization for any $f \in M_k(\GN(1))$ since it is known from
the theory of modular forms that $f$ belongs to the polynomial ring $\C [E_4,\,E_6]$.
For $4 \leq k \leq 10$, 
$M_k(\GN(1))$ is one-dimensional, and is spanned by the Eisenstein series $E_k$.
It turns out that the left hand side of \eqref{chazy} can be expressed as
$f_4+4f_2^2$ (see Section 3) which by virtue of Lemma ~\ref{fk}, must be in 
$M_8(\GN(1))$. Hence, $f_4+4f_2^2 = \C E_8$. But with $y = \pi i E_2$, the
expression $f_4+4f_2^2$ is a differential polynomial in $E_2$, and vanishes as 
$q \to 0$, whereas $E_8 \to 1$. Therefore, the Chazy equation $f_4+4f_2^2=0$ follows 
from this ``modular'' argument. 
\begin{center}
{\it 4.2. Case 2}
\end{center}
The triangle group associated with $\alpha = 0,\, 
\beta=\sf13,\, \gamma=\sf13$ is denoted by $\GN^2$. It is a normal subgroup of the 
modular group $\GN(1)$ of index 2, generated by the period-3 transformations
$$ T_1: \quad z \to \frac{-1}{z+1}\,, \qquad \quad T_2: \quad z \to \frac{z-1}{z} \,,$$
whose respective fixed points $e^{2 \pi i/3}$ and $e^{\pi i/3}$ in $\UH$ form the 
vertices $z(1)$ and $z(\infty)$ of the fundamental triangle T of $\GN^2$ together 
with the parabolic fixed point $z(0) = i\infty$. An important point to note here
is that the stabilizer of the vertex $z(0) = i\infty$ in $\GN^2$ is given by
the transformation $T_2T_1:\,\, z \to z+2$ instead of the translation $z \to z+1$, which
is {\it not} an element of $\GN^2$.

One proceeds with the
construction of the triangle function $S(0, \sf13, \sf13; z)$ in a similar manner as 
in Case 1. Define the conformal mapping by 
\begin{equation}
z(s)= \frac{\chi_2}{\chi_1}\,, \qquad \chi_1 =\, _2F_1(\sf{1}{6}, \sf{1}{2};1;s)\,,
\quad \chi_2 = A\,\,_2F_1(\sf{1}{6}, \sf{1}{2};\sf23;1-s) + B\chi_1\,,
\label{z-case2}
\end{equation}
where $_2F_1(\sf{1}{6}, \sf{1}{2};1;s),\,_2F_1(\sf{1}{6}, \sf{1}{2};\sf23;1-s)$
are solutions holomorphic in the neighborhoods of $s=0$ and $s=1$ respectively, of
the hypergeometric equation
\begin{equation}
\chi'' + \left(\frac{1}{s}+\frac{2}{3(s-1)}\right)\chi' +
\frac{1/12}{s(s-1)}\chi = 0 \,,
\label{hyper-case2}
\end{equation}
with parameters $a=\sf{1}{6}, \, b=\sf{1}{2}$ and $c=1$.  
The constants $A, B$ are determined as before by considering the analytic continuation of
the solutions near $s=0$, where
$$ _2F_1(\sf{1}{6}, \sf{1}{2}; \sf23;1-s) = \frac{\GN(\sf23)}{\GN(\sf{1}{6})\GN(\sf{1}{2})}
\left[-\log(s)_2F_1(\sf{1}{6},\sf{1}{2};1;s) +
\sum_{n=0}^{\infty}\frac{(\sf{1}{6})_n(\sf{1}{2})_n}{(n!)^2}\,h_ns^n \right]\,,
$$
with $h_n:=2\psi(1+n)-\psi(\sf{1}{6}+n)-\psi(\sf{1}{2}+n)$. Then from \eqref{z-case2}, 
one has
$$ z(s) = A_1(-\log(s) + h(s)) + B\,, \qquad
A_1 = A\,\frac{\GN(\sf23)}{\GN(\sf{1}{6})\GN(\sf{1}{2})}\,, $$
where $h(s)$ is a holomorphic function near $s=0$. As indicated earlier, since
$z \to z+2$ stabilizes the vertex at $z(0) = i\infty$, the pair $\{\chi_1,\,\chi_2\}$
must form a basis of solutions near $s=0$ with monodromy 
$\big(\begin{smallmatrix} 1 & 2 \\ 0 & 1 \end{smallmatrix}\big)$. Thus, the constant $A$ is
determined by taking $A_1 = i/\pi$ above, so that 
$z \to z+2$ onto the next branch as $s$ makes a circuit around $s=0$. The constant
$B$ is then obtained from the condition $z(1) = e^{2 \pi i/3}$, which yields
$B = -e^{\pi i/3}$ after a similar calculation as in case 1. With these values of
$A,\, B$ in \eqref{z-case2}, it is possible to construct the power series
for $q := e^{\pi i z(s)}$ near $s=0$, and its inverse $s(z) = B_2q(1+b_1q+b_2q^2+ \ldots)$
which gives the $q$-series for the triangle function $S(0,\sf12, \sf13; z)$. 
The constant $B_2 = i48\sqrt{3}$, which follows from the leading coefficient 
$h_0 = 2\psi(1)-\psi(\sf16)-\psi(\half)$ above and evaluation of Digamma functions at 
rational arguments.

From the Wronskian $W(\chi_1,\,\chi_2)$ of the two solutions of \eqref{hyper-case2},
one computes the constant $C = A_1 = i/\pi$ in \eqref{yimp}. With this value of $C$,
and $b_1=b_2=\sf23$ from Table 1, \eqref{yimp} then gives the parametrization 
\begin{equation}
y(z(s)) = \pi i P = \pi i (1-s)^{\sf23}(\chi_1^2 + 6s\chi_1\chi_1')\,, \qquad
\chi_1(s) = \,_2F_1(\sf{1}{6},\sf{1}{2};1;s)\,.
\label{yimp-case2}
\end{equation}
One also deduces from arguments aligned with Proposition \ref{Ppara-1} that, 
$$ Q(q) = (1-s)^{\sf13}\chi_1^4\,, \qquad R(q) = (1-\half s)\chi_1^6 \,, 
\qquad q = e^{2 \pi i z} \,. $$

\begin{center}
{\it 4.3. Cases 3, 4 \& 6}
\end{center}
The automorphism groups in these cases correspond to
the level-$N$ congruence subgroups $\GN_0(N)$ of the modular group $\GN(1)$, 
for $N=2,\,3,\,4$. The congruence subgroup $\GN_0(N)$ is defined by
$$\Gamma_{0}(N):=\left\{\gamma = \begin{pmatrix} a & b \\ c & d\end{pmatrix}
\in \SL : c \equiv 0 \, \pmod{N}\right\}\,.$$
The fundamental triangle T in each of these 3 cases has 2 parabolic vertices
at $z(0)=i\infty,\, z(\infty) = 0$, while the vertex $z(1)$ corresponds to
an elliptic fixed point of order 2 at $z(1) = i$ for $\GN_0(2)$, an elliptic 
fixed point of order 3 at $z(1) = e^{2 \pi i/3}$ for $\GN_0(3)$, and a parabolic 
point at $z(1) = \half$ for $\GN_0(4)$. Recall from Section 3 that in Case 6(b),
the automorphism group is $\GN(2)$ which is conjugate to $\GN_0(4)$. The 
principal congruence subgroup $\GN(2)$ is defined as
$$\Gamma(2):=\left\{\gamma \in P\SL : \gamma \equiv 
\begin{pmatrix} 1 & 0 \\ 0 & 1\end{pmatrix} \, \pmod{2}\right\}\,,$$
which is a normal subgroup of $\GN(1)$ of index 6. The fundamental
triangle of $\GN(2)$ is same as that of $\GN_0(4)$, but with respect to the variable 
$z/2$. That is, the parabolic vertices are located at $z(0) = i\infty, z(\infty) = 0$,
and $z(1) = 1$ with respect to the local coordinate $z \in \UH$.

The triangle function $S(0, \beta, 0; z)$
in each of these cases will be constructed by developing a $q$-expansion near
the parabolic vertex $z(0) = i\infty$, which is stabilized by the translation
$z \to z+1$. From \eqref{hyper-1}, the hypergeometric equation corresponding to
exponent differences $0, \beta, 0$ is
\begin{equation}
\chi''+\Big(\frac{1}{s}+\frac{1-\beta}{s-1}\Big)\chi'+
\frac{(\beta-1)^2}{4s(s-1)}\chi=0 \,,
\label{hyper-case3}
\end{equation}
with parameters $a=b=(1-\beta)/2$, and c=1. So the exponents are $0,\,0$ at the regular 
singular point $s=0$, and $a,\,a$ at $s=\infty$. A one-dimensional space
of solutions for \eqref{hyper-case3} near $s=0$ and $s=\infty$ respectively, are
spanned by the functions $_2F_1(a, a;1;s)$ and $(-s)^{-a}\,_2F_1(a,a;1;s^{-1})$.
The second linearly independent solution near each singular point contains
logarithms. The conformal mapping is defined as
\begin{equation}
z(s)= A\frac{\chi_2}{\chi_1}\,, \qquad \chi_1 =\, _2F_1(a,a;1;s)\,,
\quad \chi_2 = (-s)^{-a}\,\,_2F_1(a, a;1;s^{-1}) \,,
\label{z-case3}
\end{equation}
where the constant $A$ is determined by considering the analytic
continuation of $\chi_2$ near $s=0$. This follows from the formula
(see e.g.~\cite{Bateman}),
$$ _2F_1(a, a; 1;s^{-1}) = \frac{\sin (\pi a)}{\pi}(-s)^{a}
\left[-\log(-s)_2F_1(a,a;1;s) +
\sum_{n=0}^{\infty}\frac{(a)_n^2}{(n!)^2}\,h_ns^n \right]\,,
$$
$-\pi < \arg(-s) < \pi$, and $h_n:=2\psi(1+n)-\psi(a+n)-\psi(1-a-n)$. Proceeding 
analogously as in
the previous two cases, $z(s)$ in \eqref{z-case2} takes the form
$$ z(s) = A_1(-\log(-s) + h(s)) \,, \qquad
A_1 = A\,\frac{\sin(\pi a)}{\pi}\,, $$
and $h(s)$ is holomorphic near $s=0$. The constant $A$ (see Table 2) is
determined by taking $A_1 = i/2\pi$ above for cases 3, 4, and 6a, such 
that $\{\chi_1,\,\chi_2\}$
has monodromy $\big(\begin{smallmatrix} 1 & 1 \\ 0 & 1 \end{smallmatrix}\big)$ 
at $s=0$. Finally, inverting the power series for $q := e^{2\pi i z(s)}$ leads
to the expansion $s(z) = \widehat{B}q(1+b_1q+b_2q^2+ \ldots)$ for $S(0,\beta, 0; z)$.
The value of the constant $\widehat{B}=-e^{-h_0}$ with $h_0 = 2\psi(1)-\psi(a)-\psi(1-a)$,
is listed in Table 2 below for each group $\GN_0(N)$. For case 6b corresponding
to the group $\GN(2)$, the parabolic vertex $z(0) = i\infty$ is stabilized by the
translation $z \to z+2$. Thus one needs to choose $A_1 = i/\pi$ above, which leads
to $A = i$ and $q = e^{\pi i z(s)}$ in this case.  
 
\begin{table}[h!]
\begin{center}
\begin{tabular}{|c|c|c|c|c|} \hline
$N$ & $\beta$ & $a = (1-\beta)/2$ & $A=i \csc(\pi a)/2$ & $\widehat{B}$ \\ \hline
2 & $\half$ & $\sf{1}{4}$ & $i/\sqrt{2}$ & $-64$ \\ \hline
3 & $\sf{1}{3}$ & $\sf{1}{3}$ & $i/\sqrt{3}$ & $-27$ \\ \hline
4 & $0$ & $\sf{1}{2}$ & $i/2$ & $-16$ \\ \hline
\end{tabular}
\end{center}
\kern-\medskipamount
\caption{parameters for constructing the triangle functions $S(0, \beta, 0; z)$.}
\label{t:case3}
\end{table}
We remark that the triangle functions $S(0, \beta, 0; z)$ associated with 
$\GN_0(N),\, N=2,3,4$ as well as $\GN(2)$ can be expressed compactly in terms of Dedekind 
eta functions. 
In fact, such expressions are available for the automorphic functions associated with 
all subgroups $\GN_0(N)$ which are of genus zero, see Ref~\cite{Maier1} for a complete list.

For the cases 3, 4 and 6a, the Wronskian of the two solutions in \eqref{z-case3} is given by 
$W(\chi_1,\,\chi_2) = Cs^{-1}(s-1)^{-2a}$, where $C=(-1)^{2a}/2 \pi i$. Then from 
\eqref{yimp} one obtains the parametrization,
\begin{equation}
y(z(s)) = -6\pi i (1-s)^{2a-1}[\{(1+2a-b_1-b_2)s+(b_2-1)\}\chi_1^2+2s(s-1)\chi_1\chi_1']\,, 
\label{yimp-case3}
\end{equation}
where $\chi_1(s) = \,_2F_1(a,a;1;s)$ and $(b_1, b_2)$ is given in Table 1. Note
that in case 6b, the constant $C=i/\pi$ in the Wronskian function. The 
parametrization of the Ramanujan functions $P, Q, R$ for all three cases are 
listed in Table 3. These are obtained in the same manner as outlined
in Proposition~\ref{Ppara-1}.
Note that there are two distinct parametrizations for each of
the three cases corresponding to $\GN_0(N),\, N=2,3,4$. For $\GN_0(2)$ or $\GN_0(3)$, 
the pair of parametrizations are related via the involution $s \to s^{-1}$, as noted
below Table 1. Under this involution, it follows from \eqref{z-case3} that 
$z \to \sf{A^2}{z}$ as the two linearly independent solutions $\chi_1$ and $\chi_2$ 
of \eqref{hyper-case3} are switched in the quotient $z$. This is related
(under appropriate normalization) to the Fricke involution: $z \to -1/Nz$ for $\GN_0(N)$
(see~\cite{Maier1} for details). Thus in Table 3, the two parametrization in the case
of $\GN_0(2)$ or $\GN_0(3)$ are transformed from one to the other by switching
$\chi = \,_2F_1(a,a;1;s)$ to $\chi = (-s)^{-a}\,_2F_1(a,a;1;s)$, $s \to s^{-1}$ as well
as, taking into account the sign reversal in the Wronskian $W$ appearing in \eqref{yimp}
through the chain rule formula $d/dz = s'(z)d/ds = (\chi^2/W)d/ds$.

\begin{table}[h]
\begin{center}
\begin{tabular}{|p{1cm}|p{1cm}|c|c|c|} \hline
$\GN$ & $(b_1, b_2)$ & $P=y/\pi i$ & $Q$ & $R$ \\ \hline
$\GN_0(2)$ & $(\half, \sf{2}{3})$ & $2(1-s)^{1/2}(\chi^2+6s\chi\chi')$
& $(4-s)\chi^4$ & $(1-s)^{1/2}(s+8)\chi^6$ \\ \hline
$\GN_0(2)$ & $(\half, \sf{5}{6})$ & $(1-s)^{1/2}(\chi^2+12s\chi\chi')$
& $(1-4s)\chi^4$ & $(1-s)^{1/2}(8s+1)\chi^6$ \\ \hline
$\GN_0(3)$ & $(\sf{2}{3}, \half)$ & $3(s-1)^{2/3}(\chi^2+4s\chi\chi')$
& $(s-1)^{1/3}(s-9)\chi^4$ & $(27-18s-s^2)\chi^6$ \\ \hline
$\GN_0(3)$ & $(\sf{2}{3}, \sf{5}{6})$ & $(s-1)^{2/3}(\chi^2+12s\chi\chi')$
& $(s-1)^{1/3}(9s-1)\chi^4$ & $(1+18s-27s^2)\chi^6$ \\ \hline
$\GN_0(4)$ & $(\sf{5}{6}, \sf{5}{6})$ & $(1-2s)\chi^2-12s(s-1)\chi\chi'$
& $16(s^2-s+\sf{1}{16})\chi^4$ & $32(2s-1)(s^2-s-\sf{1}{32})\chi^6$ \\ \hline
$\GN(2)$ & $(\sf{2}{3}, \sf{2}{3})$ & $(1-2s)\chi^2-6s(s-1)\chi\chi'$
& $(s^2-s+1))\chi^4$ & $\half(2s-1)(s^2-s-2)\chi^6$ \\ \hline
\end{tabular}
\end{center}
\kern-\medskipamount
\caption{Parametrization of $P, Q, R$. Here $\chi=\,_2F_1(a,a;1;s)$ and
$s(z) = S(0, \beta, 0; z)$ with $a$ and $\beta$ values given in Table 2.}
\label{t:PQR}
\end{table}

The relationship between the parametrization of the Ramanujan functions
(Eisenstein series) $P, Q, R$ listed in Table 3 and Ramanujan's theories
of elliptic integrals will now be established. As mentioned in Section 1, Ramanujan
originally gave the parametrization \eqref{Ppara} in terms of complete elliptic
integrals of the first kind. These correspond to the parametrization associated 
to the groups $\GN_0(4)$ and $\GN(2)$ in Table 3. The parametrization corresponding to
$\GN_0(2)$ and $\GN_0(3)$ are related to the parametrization \eqref{signature}
in Ramanujan's alternative theories. The hypergeometric functions appearing in 
\eqref{Ppara} and \eqref{signature} are related to those in Table 3 via the 
well-known Pfaff transformation
\begin{equation*}
_2F_1(a,b;c;s) = (1-s)^{-a}\,_2F_1(a,c-b;c;x(s))\,, \qquad \quad x(s) = \frac{s}{s-1}\,,
\label{pfaff-1}
\end{equation*}
so that in the present case with $a=b$, $c=1$, \eqref{z-case3} takes the form
\begin{equation}
z(s) = A(-s)^{-a}\frac{_2F_1(a,a;1;s^{-1})}{_2F_1(a,a;1;s)}
 = \frac{i}{2\sin(\pi a)} \frac{_2F_1(a,1-a;1;1-x)}{_2F_1(a,1-a;1;x)}\,.
\label{z-case3a}
\end{equation}
Then by setting $a=1/r$ and $2 \pi i z(s) = -u_r$, \eqref{z-case3a} coincides 
with \eqref{signature}, and thus one recovers from Table 3 the Eisenstein series 
parametrizations in Ramanujan's theories for signatures $r=4,3,2$ corresponding to
the groups $\GN_0(2), \GN_0(3)$ and $\GN_0(4)$, respectively. For completeness, these
are presented in Table 4 below, in terms of the hypergeometric function 
$\chi_r(x):=\, _2F_1(\sf{1}{r},\sf{r-1}{r};1;x)$ which appears in Ramanujan's theories.

\begin{center}
{\it 4.4. Case 5}
\end{center}
The triangle group corresponding to $\alpha=0,\,\beta = \sf{2}{3},\,\gamma=0$
is not a Fuchsian group of first kind ($1/\beta$ is not a positive integer) although the 
triangle functions $s(z) = S(0, \half, \sf{1}{3}; z)$ and 
$\widehat{s}(z) = S(0, \sf{2}{3}, 0; z)$ are related via 
$s(\epsilon z) = -4\widehat{s}(z)/(1-\widehat{s}(z))^2, \, \epsilon = -\sqrt[3]{\sf14}$,
as mentioned in Section 3.    
The main significance of this case lies in the fact that it corresponds to Ramanujan's
alternative theory of signature $r=6$ as outlined below. 

The triangle function $S(0, \sf{2}{3}, 0; z)$ can
be constructed employing the hypergeometric theory in exactly the same way as for the
cases associated with the subgroups $\GN_0(N),\, N=2, 3, 4$ of the modular group,
even though this case is not known to have any ``modular interpretation''~\cite{Maier1}.
Specifically, one proceeds from the hypergeometric equation ~\eqref{hyper-case3} with
$\beta = \sf{2}{3}$, and defines $z(s)$ as in \eqref{z-case3} with $a=\sf{1}{6}$.
Then one finds that the constants $A = i$ and $\widehat{B} = -e^{h_0} = -432$. The
Wronskian in this case turns out to be $W(\chi_1, \chi_2) = C s^{-1}(s-1)^{-1/3}$ 
where $C=-1/2 \pi i$. Finally, substituting $(b_1, b_2) = (\sf{1}{3}, \sf{5}{6})$ from 
Table 1 into \eqref{yimp-case3} and using arguments similar to that in
Proposition~\ref{Ppara-1}, one obtains the following 
parametrization for the Eisenstein series in Ramanujan's theory of signature 
$r=a^{-1} = 6$,
\begin{gather}
P(q) = \frac{y(z(s))}{\pi i} = (1-s)^{\sf13}(\chi_1^2 + 12s\chi\chi')\,, \qquad
Q(q) = (1-s)^{\sf23}\chi^4\,, \qquad R(q) = (1+s)\chi^6 \,, \nonumber \\
\chi(s) = \,_2F_1(\sf{1}{6},\sf{1}{6};1;s) \qquad 
q = \exp\Big(-2 \pi (-s)^{-\sf16} \,\frac{_2F_1(\sf{1}{6},\sf{1}{6};1;s^{-1})}
{_2F_1(\sf{1}{6},\sf{1}{6};1;s)}\Big)\,.
\label{yimp-case5}
\end{gather}
The map in \eqref{z-case3a} induced by the Pfaff transformation also applies to this
case, and leads to the alternative parametrization for $r=6$ listed in Table 4.

\begin{table}[h]
\begin{center}
\begin{tabular}{|p{6mm}|p{1cm}|c|c|c|} \hline
$r$ & $(b_1, b_2)$ & $P=y/\pi i$ & $Q$ & $R$ \\ \hline
$2$ & $(\sf56, \sf56)$ & $(1-5x)\chi_2^2 + 12x(1-x)\chi_2\chi_2'$
& $(1+14x+x^2)\chi_2^4$ & $(1+x)(1-34x+x^2)\chi_2^6$ \\ \hline
$2*$ & $(\sf23, \sf23)$ & $(1-2x)\chi_2^2+6x(1-x)\chi_2\chi_2'$
& $(1-x+x^2)\chi_2^4$ & $(1+x)(1-\sf52 x + x^2)\chi_2^6$ \\ \hline
$3$ & $(\sf{2}{3}, \half)$ & $(3-4x)\chi_3^2+12x(1-x)\chi_3\chi_3'$
& $(9-8x)\chi_3^4$ & $(8x^2-36x+27)\chi_3^6$ \\ \hline
$3$ & $(\sf{2}{3}, \sf{5}{6})$ & $(1-4x)\chi_3^2+12x(1-x)\chi_3\chi_3'$
& $(1+8x)\chi_3^4$ & $(1-20x-8x^2)\chi_3^6$ \\ \hline
$4$ & $(\sf12, \sf23)$ & $(2-3x)\chi_4^2+12x(1-x)\chi_4\chi_4'$
& $(4-3x)\chi_4^4$ & $(8-9x)\chi_4^6$ \\ \hline
$4$ & $(\sf12, \sf56)$ & $(1-3x)\chi_4^2+12x(1-x)\chi_4\chi_4'$
& $(1+3x)\chi_4^4$ & $(1-9x)\chi_4^6$ \\ \hline
$6$ & $(\sf13, \sf56)$ & $(1-2x)\chi_6^2+12x(1-x)\chi_6\chi_6'$
& $\chi_6^4$ & $(1-2x)\chi_6^6$ \\ \hline
\end{tabular}
\end{center}
\kern-\medskipamount
\caption{Parametrization of $P, Q, R$ via $\chi_r=\,_2F_1(\sf1r,\sf{r-1}{r};1;x)$,
$x(s) = s/(s-1)$. The corresponding triangle functions are $s(z)=S(0,\beta,0; z)$ 
with $\beta = 1-\sf2r$.}
\label{t:PQRr}
\end{table}

The parametrizations listed in Table 4 can be found elsewhere, e.g., in
Refs.~\cite{BBG95,Berndt1}.
The first entry in Table 4 for $r=2$ appears in \eqref{Ppara} as Ramanujan's original
parametrization.
The second entry (case 2*) corresponds to the parametrization of the Eisenstein
series in terms of the elliptic modular function $x=\lambda(z)$. Recall from Table 1
that the triangle function for $\GN(2)$ is $S(0, 0, 0; z/2)$ where $S(0, 0, 0; z)$ is the
triangle function for $\GN_0(4)$. Then from the formula given at the end of Section 3, it
follows that $S(0, 0, 0; z/2) = \lambda(z)[\lambda(z)-1]^{-1}$.
The expression of $x(s)$ in the Pfaff transformation formula implies
that the involution $s \to \sf1s$ corresponds to the involution $x \to 1-x$, 
as well as the switching of the two $_2F_1$ functions
in the quotient $z(s)$ in \eqref{z-case3a}. This leads to the similar transformation as 
in Table 3, between the two cases for $r=3$ or 4 listed in Table 4. It is interesting to
note that the parametrization in the sextic case remain invariant under the involution 
$x \to 1-x$.

\begin{center}
{\bf 5. Concluding remarks}
\end{center}
In this note, we have reviewed the relationship between the Chazy equation
and the conformal maps defined by ratios of solutions of Fuchsian equations with 3 regular 
singular points. The Chazy equation turns
out to be a particular case of more general third order nonlinear differential
equations which can be derived systematically by developing the transformation
properties of functions under the projective monodromy group associated with the 
Fuchsian equation. Much of this note has been devoted to exemplify the important connection
between the Chazy equation and Ramanujan's study of elliptic integrals and theta functions
which play a defining role in the contemporary theory of modular forms and elliptic surfaces.
By the way of elucidating this beautiful relationship, we have derived all possible
parametrizations of the Chazy solution $y(z)$ via the Schwarz triangle functions.
We show that these parametrizations are also related to Ramanujan's parametrization for 
the Eisenstein series $P, Q, R$ in terms of hypergeometric functions, arising
in his theories of modular equations of signatures 2, 3, 4 and 6. We
also give two additional hypergeometric parametrizations stemming from the
triangle functions associated with the modular group $\GN(1)$ and its index 2 subgroup
$\GN^2$. The case corresponding to the full modular group $\GN(1)$ was considered
by Chazy himself \cite{Chazy1}. Furthermore, we note that it is possible to systematically
construct a number of third order nonlinear equations associated with modular
as well as other automorphic groups. Some of these equations corresponding to
the subgroups groups $\GN_0(2), \GN_0(3)$ and $\GN_0(4)$ of $\GN(1)$ have been studied 
recently in Refs.~\cite{ACH2006,Maier2} and in Ref.~\cite{Huber} which
analyzes the Ramanujan's differential system \eqref{dPQR} as mentioned in Section 1.
We end this note with an application of the parametrization
of Eisenstein series in Table 4 to solutions of differential equations.
 
Consider the DH system of differential equations which was introduced in Section 2, and
which corresponds to the system \eqref{gdh} with $\tau = 0$. As mentioned there, if
the variables $w_j, \, j=1, 2, 3$ satisfy the DH system then $y(z) := 2(w_1+w_2+w_3)$ 
solves the Chazy equation \eqref{chazy}. In the process of verifying this assertion, 
one computes $y'(z) = 2(w_1w_2+w_2w_3+w_3w_1)$ and $y''(z) = 12w_1w_2w_3$, which implies
that the DH variables $w_j$ are the distinct roots of the cubic $w^3 - (y/2)w^2 + (y'/2)w-y''/12 = 0$.  
Expressing $y, y', y''$ in terms of the functions $P, Q, R$ (see e.g. Proposition~\ref{Ppara-1})
and introducing $t=(6/i \pi)w$, one obtains the cubic equation
$$ t^3 - 3Pt^2 + 3(P-Q^2)t - (P^3-3PQ+2R) = 0 \,,$$
whose coefficients are polynomials in $P, Q, R$. Now note from Table 1 that 
$y(z) := 2(w_1+w_2+w_3)$ corresponds to the case 6b associated with the group $\GN(2)$.
The hypergeometric parametrization for this is the case r=2* in Table 4. Using those
parametrizations for $P, Q, R$ in the above cubic, one can verify that the cubic
factorizes as $(t-t_1)(t-t_2)(t-t_3)$ with
$$ t_1 = 3(1-x)\chi_2^2 + t_3\,, \qquad t_2 = -3x\chi_2^2 + t_3\,, \qquad
t_3 = 6x(1-x)\chi_2\chi_2' \,. $$
Thus one immediately recovers the special solutions $w_j = (\pi i/6)t_j, \, j=1, 2, 3$
for the DH system in terms of elliptic modular form $x = \lambda(z)$ and 
$\chi_2 = \,_2F_1(\half, \half; 1; x) = (2/\pi)K(x)$, where $K(x)$ is the complete
integral of the first kind defined in Section 1.
Such linearization procedures to derive exact solutions can be also employed to
a large class of nonlinear differential equations with automorphic properties
as discussed in this article. These problems will be addressed elsewhere.

\begin{center}
{\bf Acknowledgments}
\end{center}
MJA is supported by NSF grant No. DMS-0602151. The work of SC is
supported by NSF grant Nos. DMS-0307181 and DMS-0807404. We thank
Professor Rob Maier and Dr. Rod Halburd for useful discussions.

\begin{center}
{\bf References}
\end{center}

\end{document}